\documentclass[a4paper,12pt]{scrartcl}
\usepackage{amsmath,amssymb,latexsym,amsthm,enumerate}
\usepackage{xcolor}
\usepackage{graphicx}
\usepackage{epstopdf}
\usepackage{changes}
\usepackage[nocompress]{cite}

\usepackage[T1]{fontenc}
\usepackage[utf8]{inputenc}
\usepackage[english]{babel}

\usepackage{hyperref}
\usepackage[capitalise,noabbrev]{cleveref}

\newcommand*{\wt}{\widetilde}
\newcommand*{\ol}{\overline}




\newcommand*{\bbN}{\mathbb N}

\newcommand*{\bbR}{\mathbb R}

\newcommand*{\cA}{\mathcal{A}}
\newcommand*{\cB}{\mathcal{B}}

\newcommand*{\cE}{\mathcal{E}}
\newcommand*{\cF}{\mathcal{F}}

\newcommand*{\N}{\mathbb{N}}

\newcommand*{\R}{\mathbb{R}}

\DeclareMathOperator*{\essinf}{ess\,inf}

\newcommand*{\eps}{\varepsilon}

\newcommand*{\sgn}{\operatorname{sgn}}

\newcommand{\be}{\begin{eqnarray*}}
\newcommand{\ee}{\end{eqnarray*}}
\newcommand{\ben}{\begin{eqnarray}}
\newcommand{\een}{\end{eqnarray}}
\newcommand{\bi}{\begin{itemize}}
\newcommand{\ei}{\end{itemize}}

\newtheorem{theo}{Theorem}[section]

\newtheorem{lemma}[theo]{Lemma}
\newtheorem{propo}[theo]{Proposition}

\theoremstyle{definition}
\newtheorem{ex}[theo]{Example}
\newtheorem{defi}[theo]{Definition}

\newtheorem{remark}[theo]{Remark}
\newtheorem{discussion}[theo]{Discussion}

\title{Self-exciting price impact\\via negative resilience in\\stochastic order books}

\author{Julia Ackermann\thanks{Institute of Mathematics, University of Gie{\ss}en, Arndtstr.~2, 35392 Gießen, Germany.
\emph{Email:} julia.ackermann@math.uni-giessen.de, \emph{Phone:} +49 (0)641 9932113.}
\and Thomas Kruse\thanks{Institute of Mathematics, University of Gie{\ss}en, Arndtstr.~2, 35392 Gießen, Germany.
\emph{Email:} thomas.kruse@math.uni-giessen.de, \emph{Phone:} +49 (0)641 9932102.}
\and Mikhail Urusov\thanks{Faculty of Mathematics, University of Duisburg-Essen, Thea-Leymann-Str.~9, 45127 Essen, Germany.
\emph{Email:} mikhail.urusov@uni-due.de, \emph{Phone:} +49 (0)201 1837428.
}}

\begin{document}

\maketitle

\begin{abstract}
Most of the existing literature on optimal trade execution in limit order book models assumes that resilience is positive.
But negative resilience also has a natural interpretation, as it models self-exciting behaviour of the price impact, where trading activities of the large investor stimulate other market participants to trade in the same direction.
In the paper we discuss several new qualitative effects on optimal trade execution that arise when we allow resilience to take negative values.
We do this in a framework where both market depth and resilience are stochastic processes.

\smallskip
\emph{Keywords:}
optimal trade execution;
limit order book;
stochastic market depth;
stochastic resilience;
negative resilience;
quadratic BSDE;
infinite-variation execution strategy;
semimartingale execution strategy.

\smallskip
\emph{2020 MSC:}
Primary: 91G10; 93E20; 60H10.
Secondary: 60G99.
\end{abstract}

\section*{Introduction}

In an illiquid financial market large orders have a substantial adverse effect on the realized prices.
It is, therefore, reasonable to divide a large order into smaller ones when an investor faces the task of closing a large position in an illiquid market.
The scientific literature on optimal trade execution problems deals with the optimization of such trading schedules.
The inputs are time horizon $T\in(0,\infty)$,
size $x\in\bbR$ (shares of a stock) of the financial position to be closed until time $T$
and model of the price impact.

The literature on optimal trade execution takes price impact as exogenously given. 
Depending on how the price impact is modeled the majority of current literature can be naturally divided into two groups.

In the first group of models, execution strategies $(X_t)_{t\in[0,T]}$ have absolutely continuous paths $t\mapsto X_t$, and the price impact at any time $t$ depends only on the derivative $\dot{X_t}$ at time $t$.
In particular, the price impact at time $t$ is independent of all orders executed at times prior to $t$ and does not influence the impact of the orders executed at times after~$t$.
Essentially, what is modeled in this approach is only market depth, and the price impact is purely instantaneous in the sense described above.\footnote{This, instantaneous, impact is alternatively called temporary impact.
There can also be a permanent component in the price impact, but it has no effect on determining optimal execution strategies.
In the literature, such models are often called the Almgren-Chriss type models; see, e.g.,
\cite{almgren2012optimal,
almgren2001optimal,
ankirchner2014bsdes}
and references therein.}

In the second group of models, trades induce a transient price impact that decays over time due to resilience effects of the price.
In such models, the execution price at time $t$ is influenced in a nontrivial way by orders filled at times prior to $t$, and the execution at time $t$ in turn influences the execution prices of subsequent orders.
Essentially, there are now two quantities to be modeled separately: market depth and resilience.
Such models are inspired by a limit order book interpretation.
The pioneering work 
Obizhaeva and Wang \cite{obizhaeva2013optimal}
models the price impact via a block-shaped limit order book (which translates into a constant market depth), where the impact decays exponentially at a constant rate.
Mathematically, it is this rate that is called \emph{resilience}.\footnote{From this perspective, the models within the first group are essentially models with infinite resilience, whereas the models in the second group are models with finite resilience.}
Our model in this paper falls into this second group.

As explained above, there is a clear qualitative difference between the models in the first and in the second group.
Moreover, this translates into qualitative differences in the optimal execution strategies.
One of the facets worth mentioning in this respect is that, as opposed to absolutely continuous strategies in the first group of models, optimal strategies in the second group are c\`adl\`ag and usually exhibit jumps\footnote{The exceptions are
Graewe and Horst \cite{graewe2017optimal} and
Horst and Xia \cite{horst2019multi},
where optimal strategies are absolutely continuous, although the models belong to the second group according to our classification.
The reason is that jumps are strongly penalized by the form of the functionals that are optimized in \cite{graewe2017optimal} and \cite{horst2019multi}.}
(in a sense, jumps at certain times allow to better exploit finite resilience).

\medskip
Most of the existing literature within the second group of models assumes that resilience is positive.
The explanation is that the impact of the trade should decay over time.
But negative resilience also has a natural interpretation, as it models self-exciting behaviour of the price impact, where trading activities of the large investor stimulate other market participants to trade in the same direction.
From this viewpoint, it seems reasonable to expect that there are (particularly unstable) periods in financial markets when the resilience is negative.
In this paper we discuss several new qualitative effects in optimal trade execution that can arise when we allow the resilience to take negative values.

In practice, resilience is difficult to estimate from real data (cf.\ Section~7.3 in Roch \cite{Roch}), and we are not aware of any empirical study of whether the resilience can be negative.
On the other hand, there recently appeared many papers on trade execution that model self-excitement of price impact in different ways, while, as explained above, negative resilience is an alternative way of modeling this effect.
As in Cay\'e and Muhle-Karbe \cite{CayeMuhleKarbe} and in Fu et al.\ \cite{FuHorstXia2020}, we motivate self-exciting price impact by the following reasons.
Imagine, for instance, a large trader performing extensive selling.
Firstly, a continued selling pressure makes it more and more difficult to find counterparties.
Secondly, such an extensive selling by the large trader may trigger stop-loss strategies by other market participants, where they start selling in anticipation of further decrease in the price.
Thirdly, extensive selling may also attract predatory traders that employ front-running strategies.
In each case, we obtain an increased price impact for subsequent trades.

For existing approaches to self-exciting price impact, see Alfonsi and Blanc \cite{AlfonsiBlanc}, Cartea et al.\ \cite{CarteaJaimungalRicci}, Cay\'e and Muhle-Karbe \cite{CayeMuhleKarbe}, Fu et al.\ \cite{FuHorstXia2020} and references therein.
We now explain that, mathematically, all these approaches and ours are pairwise substantially different.
In \cite{CayeMuhleKarbe} the framework is of the Almgren--Chriss type, and self-excitement is produced by the trades of the large trader in a way that the price impact coefficient depends on the trading activity of the large trader.
In \cite{AlfonsiBlanc} the orders of the large trader incur price impact like in the Obizhaeva--Wang model (with positive resilience), while the orders of other market participants are modeled by Hawkes processes with self-exciting jump intensities.
That is, in contrast to the previously mentioned approach, self-excitement is produced by the trades of other market participants.
\cite{CarteaJaimungalRicci} again use Hawkes processes but in a quite different way:
they consider an execution model where the large trader places limit orders whose fill rates depend on mutually exciting ``influential'' market order flows.
\cite{FuHorstXia2020} consider liquidation games between several large traders (and the corresponding mean-field limit as well as the single player subcase) with a self-exciting order flow. 
	In a sense, self-excitement in \cite{FuHorstXia2020} is ``more endogenous'' than in the other mentioned approaches (including ours, where the resilience process is exogenously given), as in \cite{FuHorstXia2020} 
	there appear ``child orders'' triggered by the large traders' trading activity, and as
	the strategies in \cite{FuHorstXia2020} come out as Nash equilibria in the game. 
	In our approach self-excitement is produced by the trades of the large trader at time instances when the resilience is negative in the Obizhaeva-Wang type model where both market depth and resilience are stochastic processes (differently from \cite{CarteaJaimungalRicci} and like in the other mentioned approaches, the large trader trades with market orders).

	Despite the differences in the set-up,  
	it is interesting to observe the following qualitative similarity in the strategies that may result from our approach and from the one in \cite{FuHorstXia2020}.
	Below we, in particular, discuss that, in our framework, it is never optimal to overshoot the execution target whenever the resilience is positive, but it can be optimal to overshoot the target if we allow the resilience to take negative values.
	In other words, in our framework, the possibility to overshoot the target is a qualitative effect of self-excitation via negative resilience.
	In the same vein, in the single player benchmark model for \cite{FuHorstXia2020} 
	without self-excitation, which goes back to Graewe and Horst \cite{graewe2017optimal}, it is not optimal to overshoot the execution target (this is observed in Theorem~2.2 of Horst and Kivman \cite{HorstKivman2021}), 
	whereas the resulting strategies in the model with self-excitation in \cite{FuHorstXia2020} do sometimes overshoot the target (cf.\ Figure~1 or Figure~2 in \cite{FuHorstXia2020}).

\bigskip
We now briefly describe our framework. 
The execution strategies are c\`adl\`ag semimartingales
$(X_t)_{t\in[0-,T]}$ with $X_{0-}=x$ and $X_T=0$.
As explained above, we need to allow for jumps (i.e., block trades).
In particular, a possibility of a block trade at time $0$ means that $X_{0-}$ can be different from $X_0$.
Notice that we do not require $X$ to have monotone paths, which means that we allow for trading in both directions within the execution strategy.
We assume that the realized price of the asset is the sum of a martingale unaffected price and a \emph{deviation process} $(D_t)_{t\in[0-,T]}$ that carries the price impact.
The inputs are the \emph{price impact process} $(\gamma_t)$ driven by \eqref{eq:dyn_gamma},
which models market depth,
and the \emph{resilience process} $(\rho_t)$.
Both enter the dynamics of the deviation process \eqref{eq:def_deviation}.
We see from \eqref{eq:def_deviation} that the sign of $(\rho_t)$ determines whether the deviation process moves back to zero or moves further away from it.
The optimal execution problem is given in~\eqref{eq:contr_prob}.

This general setting is elaborated in Ackermann et al.\ \cite{ackermann2020cadlag}, where the solution to the optimal execution problem is described via a solution to a challenging quadratic BSDE (characteristic BSDE).
In \cite{ackermann2020cadlag} it is shown that the characteristic BSDE has a solution in two specific subsettings of this general framework.
In this paper we also complement these results by establishing the existence for the characteristic BSDE in the subsetting, where the resilience process $(\rho_t)$ and the processes $(\mu_t)$ and $(\sigma_t)$ in dynamics \eqref{eq:dyn_gamma} for the price impact process $(\gamma_t)$ are independent from the driving Brownian motion $(W_t)$.
It turns out that this subsetting is feasible enough to study some new qualitative effects of negative resilience and to explicitly construct pertinent examples.

It is worth noting that the majority of papers on models with finite resilience considers execution strategies of finite variation.
In this stream of literature, strategies of infinite variation were first included by Lorenz and Schied \cite{lorenz2013drift}, where they allow for a non-martingale dynamics in the unaffected price, and hence the execution strategies need to account for the fluctuations in it.
Recently, strategies of infinite variation emerge in related frameworks of Horst and Kivman \cite{HorstKivman2021} and Fu et al.\ \cite{FuHorstXia2022}. 
In the framework of \cite{ackermann2020cadlag} we need to include strategies of infinite variation, as they actually come out as optimal trading schedules, e.g., to account for the fluctuations in $(\gamma_t)$ and $(\rho_t)$.
This comes with some adjustments in the conventional setting of the optimal execution problem, where the most important one is the term $d[\gamma,X]$ in the dynamics \eqref{eq:def_deviation} of the deviation process $(D_t)_{t\in[0-,T]}$.
As the main theme of this paper is to discuss the effects of negative resilience, we withdraw from an extended discussion of the term $d[\gamma,X]$ in  \eqref{eq:def_deviation} but rather refer an interested reader to \cite{ackermann2020cadlag}.
From this perspective, we mention that Carmona and Webster \cite{carmona2019selffinancing} provide a strong empirical evidence that trading strategies of large traders are of infinite variation nature.

We, finally, embed our paper into a broader set of related literature on optimal trade execution in models with finite resilience.
After the pioneering paper\footnote{In SSRN it appeared already in 2005.} \cite{obizhaeva2013optimal} subsequent work either extends the framework in different directions or suggests alternative frameworks with similar features.
Alfonsi et al.\ \cite{alfonsi2008constrained} study constrained portfolio liquidation in a model of the type as in \cite{obizhaeva2013optimal}.
There is a subgroup of models which include more general limit order book shapes, see
Alfonsi et al.\ \cite{alfonsi2010optimal},
Alfonsi and Schied \cite{alfonsi2010boptimal},
Predoiu et al.\ \cite{predoiu2011optimal}.
Models in another subgroup extend the exponential decay of the price impact to general decay kernels, see
Alfonsi et al.\ \cite{alfonsi2012order},
Gatheral et al.\ \cite{gatheral2012transient}.
Finite player games with deterministic model parameters and transient impact were studied by Luo and Schied \cite{LuoSchied}, Schied et al.\ \cite{SchiedStrehleZhang}, Schied and Zhang \cite{SchiedZhang} and Strehle \cite{Strehle}. 
Models with transient multiplicative price impact have recently been analyzed in Becherer et al.\ \cite{becherer2018optimala,becherer2018optimalb}, whereas Becherer et al.\ \cite{becherer2019stability} contains a stability result for the involved cost functionals.
Superreplication and optimal investment in a block-shaped limit order book model with exponential resilience is discussed in Bank and Dolinsky \cite{BD_AAP_2019,BD_Bern_2020} and in Bank and Vo\ss{} \cite{BV_SIFIN_2019}.
The present paper falls into the subgroup that studies time-dependent (possibly stochastic) market depth and resilience, see
Ackermann et al.\ \cite{ackermann2020optimal,ackermann2020cadlag},
Alfonsi and Acevedo \cite{alfonsi2014optimal},
Bank and Fruth \cite{bank2014optimal},
Fruth et al.\ \cite{fruth2014optimal,fruth2019optimal}.
To point out the difference from our present paper, we notice that all mentioned papers except \cite{ackermann2020optimal,ackermann2020cadlag} consider only positive resilience,
the framework in \cite{ackermann2020optimal} is in discrete time,
while \cite{ackermann2020cadlag} does not study the question of what kind of new effects can arise when the (time-dependent) resilience process is allowed to take negative values.

The paper is organized as follows.
\cref{sec:prob_form} contains a precise description of our setting and formulates the problem of optimal trade execution in this setting.
\cref{sec:sol_ote} describes the solution to this optimal trade execution problem based on the results from \cite{ackermann2020cadlag}.
The key ingredient here is \cref{propo:existenceBSDEindependentcoeff} that establishes existence of a solution to the characteristic BSDE in our setting.
In \cref{sec:ojpc} we present two general results about the possibility for optimal execution strategies in such models to overjump zero or to exhibit premature closure.
Loosely speaking, a necessary condition for overjumping zero or premature closure is to have negative resilience at least for some time, while a sufficient condition for that is to have negative resilience for some time close to the time horizon $T$.
See \cref{sec:ojpc} for the precise formulations and more detailed discussions.
Via case studies in \cref{sec:cs} we address several questions that arise in discussions in \cref{sec:ojpc}.
For instance, one of the examples shows that the ``close to $T\,$''-requirement in the sufficient condition mentioned above is essential.
It is worth mentioning that \cref{sec:cs} contains both examples with deterministic optimal strategies and examples with stochastic ones and, in the latter examples, the strategies are of infinite variation.
In one of other examples we see that, with resilience that can take negative values, it is possible that the optimal execution strategy closes the position at a certain point in time and reopens it immediately. Finally, the paper is concluded with a more tricky example, where the position is kept closed during a time interval, after which it is reopened again.

\section{Problem formulation}\label{sec:prob_form}

Let us introduce the stochastic order book model in which we analyze the effects of negative resilience. In \cref{rem:special_case} below we explain in which sense the model is a special case of the model considered in \cite{ackermann2020cadlag}
and in which sense not.
\cref{rem:details_fs} provides information on where to find more detailed motivations and derivations of the order book model.

We fix a terminal time $T>0$ and consider trading in the time interval $[0,T]$. 
Let $(\Omega, \cF_T, (\cF_t)_{t \in [0,T]}, P)$ be a filtered probability space that satisfies the usual conditions and supports a Brownian motion $W=(W_t)_{t\in[0,T]}$. 
Furthermore, we assume that $(\cF_t)_{t \in [0,T]}$ has the structure
$\cF_t=\bigcap_{\eps>0}(\cF^W_{t+\eps}\vee\cF^\perp_{t+\eps})$,
$t \in [0,T)$,
$\cF_T=\cF^W_T\vee\cF^\perp_T$,
where $(\cF^W_t)_{t \in [0,T]}$ denotes the filtration generated by $W$, and $(\cF^\perp_t)_{t \in [0,T]}$ is a right-continuous complete filtration such that $\cF_T^W$ and $\cF^\perp_T$ are independent.  
Throughout the paper,
$E_t[\cdot]$ denotes the conditional expectation $E[\cdot | \cF_t]$ for $t \in [0,T]$,
and $\mu_L$ denotes the Lebesgue measure on $[0,T]$.

As input processes we require three $(\cF^\perp_t)_{t\in[0,T]}$-progressively measurable processes 
$\rho=(\rho_t)_{t\in[0,T]}$, $\mu=(\mu_t)_{t\in[0,T]}$, and $\sigma=(\sigma_t)_{t\in[0,T]}$ such that there  exist deterministic $\ol c, \ol\varepsilon \in(0,\infty)$
such that
\begin{align}
2\rho_. + \mu_.-\sigma_.^2&\geq \ol \varepsilon\quad P\times\mu_L\text{-a.e.},
\label{eq:Cgeeps}\\
\max\{|\rho_.|,|\mu_.|\} &\leq \ol c\quad P\times\mu_L\text{-a.e.}
\label{eq:Cbdd}
\end{align}
Here and in what follows, we write $\rho_.$, $\mu_.$, etc., to emphasize the presence of the time variable, i.e., we do so to indicate that we speak about the process as a whole. 
Assumption \eqref{eq:Cgeeps} is a structural condition on the input processes which, roughly speaking, ensures that the minimization problem under consideration (see \eqref{eq:contr_prob} below) is convex. To see this, we refer to the alternative representation of the cost function provided in \cite[Theorem~3.1]{ackermann2020cadlag}. Note that the process $2\rho + \mu-\sigma^2$ also shows up in the denominator of the driver of the characteristic BSDE \eqref{eq:BSDEforBM}. Assumption \eqref{eq:Cbdd} is a boundedness condition that we need in order to ensure existence of a solution of BSDE \eqref{eq:BSDEforBM}. Please note that Assumption \eqref{eq:Cgeeps} in combination with Assumption \eqref{eq:Cbdd} also implies boundedness of $\sigma$. 

The two processes $\mu$ and $\sigma$ are used to model price impact. 
More precisely, we define the \emph{price impact process} $\gamma=(\gamma_t)_{t\in[0,T]}$ to be the solution of 
\begin{equation}\label{eq:dyn_gamma}
	d\gamma_t = \gamma_t \mu_t dt + \gamma_t \sigma_t dW_t, \quad t \in [0,T], \quad  \gamma_0>0,
\end{equation}
where $\gamma_0$ is a positive $\cF_0$-measurable random variable. 
Consequently, $\gamma$ is the positive continuous $(\cF_t)_{t\in[0,T]}$-adapted process 
\begin{equation*}
	\gamma_t = \gamma_0 \exp\left( \int_0^t \left( \mu_s - \frac{\sigma_s^2}{2} \right) ds + \int_0^t \sigma_s dW_s \right) ,\quad t \in [0,T].
\end{equation*}

Given an open position $x\in\R$ to be liquidated, an \emph{execution strategy} is a c\`adl\`ag semimartingale $X=(X_t)_{t\in[0-,T]}$ such that $X_{0-}=x$ and $X_T=0$. 
For any $t \in [0,T]$, the quantity $X_{t-}$ describes the remaining position to be closed during $[t,T]$. As in \cite{ackermann2020cadlag} we follow the convention that a positive position $X_{t-}>0$ means the trader has to sell an amount of $|X_{t-}|$ shares, whereas $X_{t-}<0$ requires to buy an amount of $|X_{t-}|$ shares. 
Note that we do not require an execution strategy $X$ to have monotone paths and hence we allow for selling and buying within the same strategy. Moreover, the paths of execution strategies can exhibit jumps and thus so-called block trades are possible. 

We assume that trading according to an execution strategy $X$ affects the asset price. To model this influence 
we associate to every execution strategy $X$ a \emph{deviation process} $D=(D_t)_{t \in [0-,T]}$
with initial deviation $d \in\R$. 
We assume that the actual price of the asset is the sum of an unaffected price and the price deviation $D$. 
The unaffected price is assumed to be a martingale satisfying suitable integrability assumptions. 
This ensures that the optimal trade execution problem we are about to set up (see \eqref{eq:contr_prob} below) does not depend on the unaffected price process and that we only need to focus on the deviation $D$ (see \cite[Remark 2.2]{ackermann2020cadlag} for more detail). 
The deviation is modeled as follows. 
Given $x,d \in \R$ and an execution strategy $X=(X_t)_{t\in[0-,T]}$, the deviation process $D=(D_t)_{t\in[0-,T]}$ associated to $X$ is defined by 
\begin{equation}\label{eq:def_deviation}
	dD_t = -\rho_t D_t dt + \gamma_t dX_t + d[\gamma,X]_t, \quad t \in [0,T], \quad D_{0-}=d,
\end{equation}
i.e., 
\begin{equation*}
	D_t = e^{-\int_0^t \rho_s ds } \left( d + \int_{[0,t]} e^{\int_0^s \rho_u du } \gamma_s dX_s + \int_{[0,t]} e^{\int_0^s \rho_u du } d[\gamma,X]_s \right), 
	\quad t \in [0,T], 
	\quad D_{0-}=d.
\end{equation*}
When ignoring the effects of $X$ on $D$ at time $t\in [0,T]$ we see from \eqref{eq:def_deviation} that the sign of $\rho_t$ determines whether the deviation tends back to $0$ or further moves away from it. In the case $\rho> 0$, which is typically assumed in the literature, the deviation is always reverting to $0$ and the speed of reversion is determined by the magnitude of $\rho$. The input process $\rho$ thus models how fast the order book recovers from past trades and is therefore called the \emph{resilience process}. We allow $\rho$ to also take negative values and thereby enable the incorporation of signaling effects, where, e.g., a series of buy trades might indicate the arrival of further buy trades and therefore lead to a further growth of the deviation process.

For $x,d \in \R$, we let $\cA_0(x,d)$ be the set of all execution strategies $X$ (i.e., c\`adl\`ag semimartingales $X=(X_t)_{t\in[0-,T]}$ with $X_{0-}=x$ and $X_T=0$) such that all three conditions 
\begin{equation*}
	\begin{split}
		& E_0\left[ \sup_{t\in[0,T]} \gamma_t^2 (X_t - \gamma_t^{-1} D_t)^4 \right] < \infty \text{ a.s.}, \\
		& E_0\left[ \left( \int_0^T \gamma_t^2 (X_t - \gamma_t^{-1} D_t)^4 \sigma_t^2 dt \right)^{\frac12} \right] < \infty \text{ a.s.}, \\
		& E_0\left[ \left( \int_0^T D_t^4 \gamma_t^{-2} \sigma_t^2 dt \right)^{\frac12} \right] < \infty \text{ a.s.}
	\end{split}
\end{equation*}
are satisfied.

Given $x,d \in \R$ and $X \in \cA_0(x,d)$, we then consider the expected costs 
\begin{equation}\label{eq:ex_cost}
	J(x,d,X)=E_0\left[ \int_{[0,T]} D_{t-} dX_t + \int_{[0,T]} \frac{\gamma_t}{2} d[X]_t \right].
\end{equation}
The optimal trade execution problem considered here consists in minimizing the expected costs over $X \in \cA_0(x,d)$. 
An optimal strategy is an execution strategy $X^* \in \cA_0(x,d)$ such that 
\begin{equation}\label{eq:contr_prob}
J(x,d,X^*)=\essinf_{X \in \cA_0(x,d)}J(x,d,X).
\end{equation}

We point out that possible jumps of the integrators at time $0$ contribute to the integrals $\int_{[0,t]} \ldots dX_s$, $\int_{[0,t]} \ldots d[X]_s$, and $\int_{[0,t]} \ldots d[\gamma,X]_s$ in the definition of the deviation~\eqref{eq:def_deviation} and the expected costs~\eqref{eq:ex_cost}. 

\begin{remark}\label{rem:details_fs}
We refer to the introduction of \cite{ackermann2020cadlag} as well as Sections~4,~5 and Appendix~A therein for a discussion of the specific form of the deviation dynamics \eqref{eq:def_deviation} and the expected costs \eqref{eq:ex_cost}. 
In short, they come from a block-shaped symmetric limit order book model, and, e.g., the term $d[\gamma,X]$ in \eqref{eq:def_deviation} appears because execution strategies are not necessarily of finite variation. 
We also mention \cite{carmona2019selffinancing} for empirical evidence that, in related settings, trading strategies are of infinite variation nature.
\end{remark}

\begin{remark}\label{rem:special_case}
Let us briefly explain in which sense the setting outlined above is a special case of the model considered in \cite{ackermann2020cadlag}. In \cite{ackermann2020cadlag} we consider a general continuous local martingale $M$ instead of the Brownian motion $W$ to drive the price impact process $\gamma$ in \eqref{eq:dyn_gamma}. Moreover, in \cite{ackermann2020cadlag} we do not require that the three input processes $\rho$, $\mu$, and $\sigma$ are independent of the martingale $M$.

In \cite[Section 7]{ackermann2020cadlag} we establish existence of a solution to a characteristic BSDE (see \cite[(3.2) \& (3.3)]{ackermann2020cadlag}) for the trade execution problem in two subsettings: The first one assumes that $\sigma \equiv 0$, whereas the second one assumes that the underlying filtration is continuous (in the sense that every martingale is continuous). In the present paper we complement these results by establishing in \cref{propo:existenceBSDEindependentcoeff} below existence of a solution to the BSDE in the subsetting outlined above. This is included in neither of the two subsettings considered in \cite{ackermann2020cadlag}. 
\end{remark}

\section{Solution of the trade execution problem}\label{sec:sol_ote}
In this section we provide a probabilistic solution of the optimal trade execution problem \eqref{eq:contr_prob}. To this end, we establish in \cref{sec:bsde} an existence result for a characteristic BSDE associated to problem \eqref{eq:contr_prob}. Subsequently, in \cref{sec:opt_strat}, we combine this result with the main results in \cite{ackermann2020cadlag} to obtain a representation of the optimal strategy for \eqref{eq:contr_prob}.

\subsection{Characteristic BSDE}\label{sec:bsde}

A representation of the minimal expected costs, a characterization for the existence of an optimal strategy as well as a formula for the optimal strategy in the general framework considered in \cite{ackermann2020cadlag} are provided as a main result in \cite[Theorem 3.4]{ackermann2020cadlag}. This result is based on the existence of a solution of a certain BSDE~\cite[(3.2) \& (3.3)]{ackermann2020cadlag}. 

In the setting of the present paper, where $\rho,\mu,\sigma$ are independent of the Brownian motion $W$, we consider the BSDE
\begin{equation}\label{eq:BSDEforBM}
	dY_t=-\left[-\frac{
		(\rho_t + \mu_t)^2Y_t^2}{\sigma_t^2Y_t+\frac{1}{2}\left(2\rho_t +\mu_t-\sigma^2_t\right)}+\mu_t Y_t\right] dt+dM^\perp_t,
	\quad t \in [0,T], 
	\quad  Y_{T}=\frac{1}{2}.
\end{equation}
By a \emph{solution of~\eqref{eq:BSDEforBM}} we mean a pair $(Y,M^\perp)$ such that
\begin{itemize}
\item
\eqref{eq:BSDEforBM} is satisfied $P$-a.s.,
\item
$Y$ is an $(\cF_t)_{t\in[0,T]}$-adapted, c\`adl\`ag, $[0,1/2]$-valued process, and
\item
$M^\perp$ is a c\`adl\`ag $(\cF_t)_{t\in [0,T]}$-martingale with $M_0^\perp=0$, $E([M^\perp]_T)<\infty$, and $[M^\perp,W]=0$.
\end{itemize}

We show in \Cref{propo:existenceBSDEindependentcoeff} the existence\footnote{Uniqueness of the solution will follow as a byproduct of our analysis; see \cref{theo:summary} below.} of a solution $(Y,M^\perp)$ of~\eqref{eq:BSDEforBM}.
Any such solution $(Y,M^\perp)$ of~\eqref{eq:BSDEforBM} provides a solution $(Y,0,M^\perp)$ of the BSDE~\cite[(3.2) \& (3.3)]{ackermann2020cadlag} (in the sense that \cite[(3.4)]{ackermann2020cadlag} is satisfied). In particular,  we can invoke in \cref{sec:opt_strat} below the main results from \cite[Section~3]{ackermann2020cadlag}.

\begin{theo}\label{propo:existenceBSDEindependentcoeff}
	Under \eqref{eq:Cgeeps} and \eqref{eq:Cbdd} there exists a solution $(Y,M^\perp)$ of~\eqref{eq:BSDEforBM}.
\end{theo}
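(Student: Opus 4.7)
The plan is to exploit the independence of $(\rho,\mu,\sigma)$ from the Brownian motion $W$: first solve \eqref{eq:BSDEforBM} inside the smaller filtration $(\cF^\perp_t)$, where it becomes a Lipschitz BSDE with bounded terminal datum, and then transport the solution to $(\cF_t)$. Writing
\[
g(t,y) := -\frac{(\rho_t+\mu_t)^2 y^2}{\sigma_t^2 y + \tfrac{1}{2}(2\rho_t+\mu_t-\sigma_t^2)} + \mu_t y,
\]
assumption \eqref{eq:Cgeeps} keeps the denominator $\ge \ol\eps/2$ for every $y\ge 0$, while \eqref{eq:Cgeeps}--\eqref{eq:Cbdd} jointly bound $\sigma^2$. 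Hence $g$ restricted to $\R\times[0,\tfrac12]$ is bounded and uniformly Lipschitz in $y$, and the capped extension $\tilde g(t,y):=g(t,(y\wedge\tfrac12)\vee 0)$ is a bounded, globally Lipschitz driver on $\R\times\R$.

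Next I would invoke the standard Picard-iteration existence result for BSDEs with Lipschitz driver in a general filtration: in $(\cF^\perp_t)$ there is a unique pair $(Y,M^\perp)$, with $Y$ cadlag of class $\cS^2$ and with $M^\perp$ a square-integrable $(\cF^\perp_t)$-martingale starting at $0$, satisfying
\[
Y_t=\tfrac12+\int_t^T\tilde g(s,Y_s)\,ds-(M^\perp_T-M^\perp_t).
\]
The hard part is then to show that $Y\in[0,\tfrac12]$, so that $\tilde g(\cdot,Y)=g(\cdot,Y)$ and the original equation \eqref{eq:BSDEforBM} is solved. The two algebraic identities I would exploit are $g(t,0)=0$ and, using that \eqref{eq:Cgeeps} gives $2\rho_t+\mu_t\ge\ol\eps+\sigma_t^2>0$,
\[
g(t,\tfrac12)=\frac{-\rho_t^2}{2(2\rho_t+\mu_t)}\le 0.
\]
Applying the Meyer-Ito (Tanaka) formula to the convex functions $x\mapsto(x-\tfrac12)^+$ and $x\mapsto x^-$ composed with $Y$, one sees that on $\{Y_{s-}>\tfrac12\}$ one has $\tilde g(s,Y_s)=g(s,\tfrac12)\le 0$ outside a $ds$-null set, while on $\{Y_{s-}<0\}$ one has $\tilde g(s,Y_s)=g(s,0)=0$. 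The local-time terms and the (convex) jump-correction terms carry the favorable sign, so taking $\cF^\perp_t$-conditional expectations and using that stochastic integrals against $M^\perp$ remain martingales gives $(Y_t-\tfrac12)^+\equiv 0$ and $Y_t^-\equiv 0$.

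Finally, I would transport $(Y,M^\perp)$ to $(\cF_t)$: $Y$ is trivially $(\cF_t)$-adapted; the independence $\cF^\perp_T\perp\cF^W_T$, together with the right-continuous product structure of $(\cF_t)$ postulated in the paper, makes every $(\cF^\perp_t)$-martingale automatically an $(\cF_t)$-martingale; and since $W$ is continuous and independent of $(\cF^\perp_t)$, the product $M^\perp W$ is an $(\cF_t)$-martingale, which forces $[M^\perp,W]=0$. The main obstacle is the bound step above: it hinges on the strict positivity of $2\rho+\mu$ (not merely of $2\rho+\mu-\sigma^2$) and on careful accounting for the jumps of $M^\perp$ in the Meyer-Ito formula, since in a general filtration $M^\perp$ need not be continuous.
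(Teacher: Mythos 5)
Your proposal is correct and follows essentially the same route as the paper: truncate the driver to $[0,1/2]$, solve the truncated BSDE in the smaller filtration $(\cF^\perp_t)$, show a priori that $Y$ stays in $[0,1/2]$ so the truncation is inactive, and lift $(Y,M^\perp)$ to $(\cF_t)$ via independence to conclude $[M^\perp,W]=0$. The only differences are in the off-the-shelf tools: the paper verifies a one-sided (monotone) Lipschitz condition and invokes a general existence theorem for monotone drivers together with a comparison principle to get $0\le Y\le \tfrac12$, whereas you observe that the truncated driver is in fact globally Lipschitz (valid, since for $y\ge 0$ the denominator stays above $\ol\varepsilon/2$ and \eqref{eq:Cgeeps}--\eqref{eq:Cbdd} bound $\sigma^2$) and re-derive the comparison by hand with the Meyer--It\^o formula applied to $(Y-\tfrac12)^+$ and $Y^-$ --- both substitutions work and yield the same conclusion.
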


\begin{proof}
	Let $L\colon \R \to [0,1/2]$ be the truncation function defined by $L(y)=(y\vee 0)\wedge \frac{1}{2}$, $y\in \R$.
	Let $\overline f\colon \Omega \times [0,T]\times \R \to \R$ be the function defined by
	\begin{equation}\label{eq:trunc_driver}
		\overline f(t,y)=-\frac{
			(\rho_t + \mu_t)^2L(y)^2}{\sigma_t^2L(y)+\frac{1}{2}\left(2\rho_t +\mu_t-\sigma^2_t\right)}+\mu_t L(y), \quad t \in [0,T], y \in\R.
	\end{equation}
	We first consider BSDE~\eqref{eq:BSDEforBM} with its driver replaced by $\overline f$ and on the filtered probability space $(\Omega, \cF^\perp_T, (\cF^\perp_t)_{t \in [0,T]}, P|_{\cF_T^\perp})$, where $P|_{\cF_T^\perp}$ denotes the probability measure $P$ restricted to the sigma algebra $\cF_T^\perp$.
Note that the expressions ``$P$-a.s.'' and ``$P|_{\cF_T^\perp}$-a.s.'' have the same meaning.
In the calculations below we assume without loss of generality that $\rho$, $\mu$, and $\sigma$ satisfy \eqref{eq:Cgeeps} and \eqref{eq:Cbdd} \emph{for all} $(\omega,t)$, as we can otherwise replace them in $\ol f$ with $(\cF^\perp_t)$-progressively measurable processes $\ol\rho$, $\ol\mu$, and $\ol\sigma$ that satisfy \eqref{eq:Cgeeps} and \eqref{eq:Cbdd} \emph{for all} $(\omega,t)$ and such that
$\ol\rho=\rho$ $P\times\mu_L$-a.e.,
$\ol\mu=\mu$ $P\times\mu_L$-a.e., and
$\ol\sigma=\sigma$ $P\times\mu_L$-a.e.
Observe that for all $t\in [0,T]$ the function $\R \ni y \mapsto \overline f(t,y)$ is continuous. Moreover, it
	is concave on $[0,1/2]$ and constant on the complement $\R \setminus [0,1/2]$. 
	This implies for all $t\in [0,T]$ and $y'<y$ that 
	\begin{equation*}
		\frac{\overline f(t,y)- \overline f(t,y')}{y-y'} 
		\leq \max\left\{ \frac{\partial^+ \overline f}{\partial y}(t,0), 0 \right\} = \max\left\{ \mu_t, 0 \right\} \le \ol c .
	\end{equation*}
	It follows that for all $t\in [0,T]$ and all $y,y' \in \R$ it holds 
	\begin{equation*}
		\big(\overline f(t,y)- \overline f(t,y') \big) (y-y') \le \ol c (y-y')^2.
	\end{equation*}
	Moreover, it holds for all $t \in [0,T]$ that 
	\begin{equation*}
		\begin{split}
			\sup_{y\in \R}\lvert\overline f(t,y) - \overline f(t,0)\rvert
			&\le \sup_{y\in \R} 
			\left\lvert \frac{
				(\rho_t + \mu_t)^2L(y)^2}{\sigma_t^2L(y)+\frac{1}{2}\left(2\rho_t +\mu_t-\sigma^2_t\right)}\right\rvert
				+\sup_{y\in \R} \lvert\mu_t L(y)\rvert\\
			& \leq \frac{2\ol c^2}{\ol \varepsilon} + \frac12 \ol c .
		\end{split}
	\end{equation*}
	This implies in particular that $\sup_{y\in \R}\lvert\overline f(\cdot,y) - \overline f(\cdot,0)\rvert \in L^2(\Omega \times [0,T])$. 
	By \cite[Proposition 5.1]{klimsiak2021nonlinear} (see also \cite[Theorem 1]{kruse2016bsdes}) 
	there exists a pair $(Y,M^\perp)$ 
	such that BSDE~\eqref{eq:BSDEforBM} on $(\Omega, \cF^\perp_T, (\cF^\perp_t)_{t \in [0,T]}, P|_{\cF_T^\perp})$ with its driver replaced by $\overline f$ is satisfied a.s., $Y$ is a c\`adl\`ag $(\cF_t^\perp)_{t\in[0,T]}$-adapted process with $E[\sup_{t \in [0,T]} Y_t^2]<\infty$, and 
	$M^\perp$ is a c\`adl\`ag $(\cF_t^\perp)_{t\in [0,T]}$-martingale with $M_0^\perp=0$ and  $E[[M^\perp]_T]<\infty$.
	
	Next, note that $(0,0)$ is a solution of the BSDE with driver $\overline f$ and terminal condition $0$. Then a comparison principle for BSDEs (e.g., \cite[Proposition 4]{kruse2016bsdes}) proves that $Y_t\ge 0$ for all $t\in [0,T]$ a.s. Furthermore, it holds a.s.\ that for all $t\in [0,T]$
	\begin{equation*}
		\overline f(t,1/2)=-\frac{
			(\rho_t + \mu_t)^2}{2\left(2\rho_t +\mu_t\right)}+\frac{1}{2}\mu_t=-\frac{\rho_t^2}{2(2\rho_t +\mu_t)}\le 0 .
	\end{equation*}
	Again the comparison principle ensures that $Y_t\le \frac{1}{2}$ for all $t\in [0,T]$ a.s. 
	In particular, the truncation function in \eqref{eq:trunc_driver} is inactive. 
	Therefore, $(Y,M^\perp)$ also a.s.\ satisfies BSDE~\eqref{eq:BSDEforBM} and $Y$ is $[0,1/2]$-valued.

Since $\cF_T^W$ and $\cF_T^\perp$  are independent and
$\cF_t=\bigcap_{\eps>0}(\cF_{t+\eps}^W\vee \cF_{t+\eps}^\perp)$ for all $t\in[0,T)$,
we have that $M^\perp$ is not only an $(\cF_t^\perp)_{t\in [0,T]}$-martingale, but also an $(\cF_t)_{t\in [0,T]}$-martingale. 
	Furthermore, we can show that $M^\perp W$ is an $(\cF_t)_{t\in [0,T]}$-martingale. 
	It follows that $\langle M^\perp, W \rangle =0$. Since $W$ is continuous, it holds that $[M^\perp,W]$ is continuous, and hence $[M^\perp,W]=0$.
	This completes the proof.
\end{proof}

\subsection{Representation of the optimal strategy}\label{sec:opt_strat}

For a solution $(Y,M^\perp)$ of \eqref{eq:BSDEforBM}, we recall from \cite[(3.5)]{ackermann2020cadlag} the 
process $\wt\beta=(\wt\beta_t)_{t \in [0,T]}$ defined by, in the present set-up, 
\begin{equation}\label{eq:wtbetawoZ}
	\wt\beta_t = \frac{(\rho_t + \mu_t)Y_t}{\sigma_t^2 Y_t + \frac12 (2\rho_t + \mu_t - \sigma_t^2 )}, \quad t \in[0,T].
\end{equation}
By \eqref{eq:Cgeeps}, \eqref{eq:Cbdd}, and the fact that $Y$ is $[0,1/2]$-valued, 
we have that $\wt\beta$ is $P\times\mu_L$-a.e.\ bounded.
It thus follows from \cite[Proposition 3.8]{ackermann2020cadlag} that the solution of \eqref{eq:BSDEforBM} is unique
up to indistinguishability.
Furthermore, by boundedness of $\wt\beta$, under the condition that 
\begin{align}
	&\exists \text{ a c\`adl\`ag semimartingale } \beta \text{ such that } \wt\beta=\beta \,\, P\times\mu_L\text{-a.e.}\label{eq:exbeta}
\end{align}
we obtain from \cite[Theorem 3.4]{ackermann2020cadlag} for any initial values $x,d\in\R$ (see also \cite[Lemma~3.3]{ackermann2020cadlag} for the case $x=\frac{d}{\gamma_0}$) the existence of an optimal strategy,
which is unique up to $P\times\mu_L$-null sets.
Notice that, in our present context, this is equivalent to uniqueness up to indistinguishability.
Indeed, if $X^{(1)}$ and $X^{(2)}$ are optimal strategies, then they are indistinguishable, as $X^{(1)}$ and $X^{(2)}$ are c\`adl\`ag and $X^{(1)}=X^{(2)}$ $P\times\mu_L$-a.e.
Note that condition \eqref{eq:exbeta} is in particular guaranteed if $\rho,\mu,\sigma$ are deterministic and of finite variation, as in the examples in \Cref{sec:cs} below.

\bigskip
We extract the following representation for the optimal strategy and its associated deviation from \cite[Theorem 3.4]{ackermann2020cadlag} provided that \eqref{eq:exbeta} holds true.
Define
\begin{equation*}
	Q_t = - \int_0^t \beta_s \sigma_s dW_s - \int_0^t \beta_s (\mu_s + \rho_s - \sigma_s^2) ds, \quad t\in [0,T],
\end{equation*}
and denote by $\cE(Q)$ the stochastic exponential of $Q$, i.e., 
\begin{equation*}
	\begin{split}
		\cE(Q)_t & = \exp\left\{ - \int_0^t \Big(\beta_s (\mu_s + \rho_s - \sigma_s^2) + \frac12 \beta_s^2 \sigma_s^2\Big)\,ds 
		- \int_0^t \beta_s \sigma_s\,dW_s \right\}, \quad t \in [0,T].
	\end{split}
\end{equation*}
Let $x,d \in \R$. 
Then the optimal strategy
$\left( X^*_t \right)_{t \in [0-,T]} \in \cA_0(x,d)$ 
is given by the formulas
\begin{equation}\label{eq:optstrat}
	\begin{split}
		& X^*_{0-}=x, \quad X^*_T=0,\\
		& X^*_t = 	\left(x-\frac{d}{\gamma_0}\right)(1-\beta_t) \cE(Q)_{t},
		\quad t\in [0,T). 
	\end{split}
\end{equation}
The associated deviation process 
$(D^*_t)_{t\in[0-,T]}$ is given by 
\begin{equation}\label{eq:optdev}
	\begin{split}
		& D^*_{0-}=d, \quad D^*_T=\left(x-\frac{d}{\gamma_0}\right)(-\gamma_T)\cE(Q)_{T},\\
		& D^*_t = 	\left(x-\frac{d}{\gamma_0}\right)(-\gamma_t\beta_t)\cE(Q)_{t},
		\quad t\in [0,T). 
	\end{split}
\end{equation}

We summarize the statements above in the following theorem.

\begin{theo}\label{theo:summary}
Assume that \eqref{eq:Cgeeps} and \eqref{eq:Cbdd} hold true. Then the solution of the BSDE \eqref{eq:BSDEforBM} is unique
up to indistinguishability.
If, in addition, \eqref{eq:exbeta} is satisfied, then for all $x,d \in \R$ the unique (up to indistinguishability)
optimal strategy $X^*$ for \eqref{eq:contr_prob} is given by \eqref{eq:optstrat} and the associated deviation process $D^*$ satisfies \eqref{eq:optdev}.
\end{theo}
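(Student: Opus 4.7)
The plan is essentially to assemble the pieces that are already spelled out in the text preceding the theorem. The proof does not require new ideas beyond combining existence of the BSDE solution (\cref{propo:existenceBSDEindependentcoeff}) with the general results cited from \cite{ackermann2020cadlag}, together with a short c\`adl\`ag argument to upgrade uniqueness of the optimal strategy.

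First, I would invoke \cref{propo:existenceBSDEindependentcoeff} to obtain a solution $(Y,M^\perp)$ of \eqref{eq:BSDEforBM}. Given such a solution, I would define $\wt\beta$ via \eqref{eq:wtbetawoZ} and verify $P\times\mu_L$-a.e.\ boundedness: the denominator $\sigma_t^2Y_t+\tfrac12(2\rho_t+\mu_t-\sigma_t^2)$ is bounded away from zero by $\tfrac{\ol\eps}{2}$ using \eqref{eq:Cgeeps} together with $Y_t\geq0$, while the numerator $(\rho_t+\mu_t)Y_t$ is bounded by $\ol c$ using \eqref{eq:Cbdd} and $Y_t\in[0,\tfrac12]$; hence $|\wt\beta_t|\leq 2\ol c/\ol\eps$. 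With $\wt\beta$ bounded, \cite[Proposition~3.8]{ackermann2020cadlag} directly yields uniqueness of the BSDE solution up to indistinguishability, which is the first assertion.

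For the second assertion I would pass to the reinterpretation $(Y,0,M^\perp)$ as a solution of the BSDE \cite[(3.2)\&(3.3)]{ackermann2020cadlag} (this is how \eqref{eq:BSDEforBM} specializes in the present independent-coefficient setting, and it is recorded in the paragraph preceding \cref{propo:existenceBSDEindependentcoeff}), so that the general machinery of \cite[Section~3]{ackermann2020cadlag} applies. Under the additional assumption \eqref{eq:exbeta}, the bounded process $\wt\beta$ admits a c\`adl\`ag semimartingale version $\beta$, and \cite[Theorem~3.4]{ackermann2020cadlag} then provides, for every $(x,d)\in\bbR^2$, an optimal strategy $X^*$ and a formula for the associated deviation process $D^*$; substituting $Z\equiv0$ into the formulas of \cite[Theorem~3.4]{ackermann2020cadlag} reduces them to \eqref{eq:optstrat} and \eqref{eq:optdev} respectively.

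It remains to upgrade the uniqueness statement. The cited result only gives uniqueness of the optimal strategy up to $P\times\mu_L$-null sets. But if $X^{(1)}$ and $X^{(2)}$ are both optimal, then both lie in $\cA_0(x,d)$ and are in particular c\`adl\`ag, and $X^{(1)}=X^{(2)}$ holds $P\times\mu_L$-a.e.; two c\`adl\`ag processes that agree $P\times\mu_L$-a.e.\ are indistinguishable, so uniqueness up to indistinguishability follows. I do not anticipate a genuine obstacle here since each ingredient is already established; the only point requiring care is the bookkeeping between the present BSDE \eqref{eq:BSDEforBM} and the more general BSDE in \cite{ackermann2020cadlag}, and the observation that in \eqref{eq:optstrat}--\eqref{eq:optdev} no $Z$-term survives because $\rho,\mu,\sigma$ are independent of $W$.
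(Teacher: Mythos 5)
Your proposal is correct and follows essentially the same route as the paper: the theorem is proved by combining the existence result of \cref{propo:existenceBSDEindependentcoeff} with the boundedness of $\wt\beta$ (denominator bounded below via \eqref{eq:Cgeeps} and $Y\ge 0$, numerator via \eqref{eq:Cbdd} and $Y\in[0,1/2]$), then citing \cite[Proposition~3.8]{ackermann2020cadlag} for uniqueness of the BSDE solution and \cite[Theorem~3.4]{ackermann2020cadlag} for the optimal strategy, and finally upgrading uniqueness from $P\times\mu_L$-a.e.\ to indistinguishability via the c\`adl\`ag property. This matches the argument given in the text preceding the theorem.
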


\section{Overjumping zero and premature closure}\label{sec:ojpc}

In this section we study qualitative effects of negative resilience on the optimal strategy. 
In particular, we examine effects that we call \emph{overjumping zero} and \emph{premature closure}. Roughly speaking, we are interested in market situations where it is optimal to change a buy program into a sell program (or vice versa), or where it is optimal to close the position strictly before the end of the execution period.
More precisely, we intend to identify market conditions under which
paths of optimal trade execution strategies with positive probability jump over the target level $0$ or already take the value $0$ prior to $T$. 
To this end recall that under \eqref{eq:Cgeeps}, \eqref{eq:Cbdd}, and \eqref{eq:exbeta}, given an initial position $x\in \R$ and an initial deviation $d\in \R$, the optimal strategy $X^*$ satisfies for all $t\in [0,T)$ that 
\begin{equation}
X^*_{0-}=x \quad \text{and} \quad X^*_t = 	\left(x-\frac{d}{\gamma_0}\right)(1-\beta_t) \cE(Q)_{t}.
\end{equation}
This representation allows to disentangle the contributions to the optimal strategy's sign of the initial conditions $x$ and $d$ on the one side and the input processes $\rho$, $\mu$, and $\sigma$ defining the market dynamics on the other side. 
Indeed, 
since the stochastic exponential $\cE(Q)$ is positive, the sign of $X^*_t$ for $t\in [0,T)$ is determined by the signs of the two factors $(x-\frac{d}{\gamma_0})$ and $(1-\beta_t)$. The first factor $(x-\frac{d}{\gamma_0})$ is determined by the initial conditions, does not depend on time, and thus can only contribute to a change of sign of $X^*$ at time $0$.
Note that $(x-\frac{d}{\gamma_0})$ has a different sign than the initial condition $X^*_{0-}=x$ if and only if $\gamma_0|x|<\sgn(x)d$. A nonzero initial deviation $d\neq 0$ can thus have the effect that $X^*$ changes its sign directly at time $0$. In practice, one would typically assume that $d=0$, in which case this factor does not contribute to a change of sign.

In the sequel we focus on the contribution of the second factor $(1-\beta)$ 
and provide definitions of the effects \emph{overjumping zero} and \emph{premature closure} which are only built upon $(1-\beta)$.
This factor and hence also these effects are determined by the input processes $\rho$, $\mu$, and $\sigma$ driving the market dynamics
and are independent of the initial conditions $x$ and $d$.

For ease of notation, we extend the domain of $\beta$ to the point $0-$ by setting $\beta_{0-}=0$.
In what follows, we denote by $\pi_\Omega$ the projection operator from $\Omega\times[0,T]$ onto $\Omega$.

\begin{defi}\label{def:131121a1}
Assume that \eqref{eq:Cgeeps}, \eqref{eq:Cbdd}, and \eqref{eq:exbeta} hold true.
Define
\begin{align*}
A_{oj}&=\{(\omega,t)\in\Omega\times[0,T):(1-\beta_{t-}(\omega))(1-\beta_t(\omega))<0\},\\
A_{pc}&=\{(\omega,t)\in\Omega\times[0,T):(1-\beta_{t-}(\omega))(1-\beta_t(\omega))=0\}.
\end{align*}

(i) We say that  
\emph{overjumping zero}
is optimal in the limit order book model driven by $\rho$, $\mu$, and $\sigma$, if $P(\pi_\Omega(A_{oj}))>0$.

\smallskip
(ii) We say that
\emph{premature closure}
is optimal, if $P(\pi_\Omega(A_{pc}))>0$.
\end{defi}

In relation with Definition~\ref{def:131121a1} we need to make the following comments.
\begin{enumerate}[(a)]
\item
$\pi_\Omega(A_{oj}),\pi_\Omega(A_{pc})\in\cF_T$
by the measurable projection theorem \cite[Theorem I.4.14]{revuzyor}
(recall that $\cF_T$ is complete and notice that
$A_{oj},A_{pc}\in\cF_T\otimes\cB([0,T])$
and, moreover, are optional sets, as $\beta$ is adapted and c\`adl\`ag).
\item
The terms \emph{overjumping zero} and \emph{premature closure} are well-defined, as $\beta$ satisfying \eqref{eq:exbeta} is unique up to indistinguishability.
\end{enumerate}

It is worth noting that the terms \emph{overjumping zero} and \emph{premature closure} could be equivalently defined with the help of stopping times:

\begin{lemma}\label{lem:131121a1}
Assume that \eqref{eq:Cgeeps}, \eqref{eq:Cbdd}, and \eqref{eq:exbeta} hold true. Then,
overjumping zero (resp., premature closure) is optimal if and only if there exists a stopping time $\tau\colon\Omega\to[0,T]$ such that $P(\tau<T)>0$ and
$$
(1-\beta_{\tau-})(1-\beta_\tau)<0
\;\;(\text{resp., }=0)\;\;
P\text{-a.s. on }\{\tau<T\}.
$$
\end{lemma}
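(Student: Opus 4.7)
The plan is to prove the two equivalences separately; the ``if'' direction is immediate, while the ``only if'' direction rests on the optional section theorem applied to the sets $A_{oj}$ and $A_{pc}$.

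For sufficiency (both cases), suppose $\tau\colon\Omega\to[0,T]$ is a stopping time with $P(\tau<T)>0$ and $(1-\beta_{\tau-})(1-\beta_\tau)<0$ (respectively $=0$) on $\{\tau<T\}$. Then on $\{\tau<T\}$ the pair $(\omega,\tau(\omega))$ lies in $A_{oj}$ (resp.\ $A_{pc}$), so $\{\tau<T\}\subseteq\pi_\Omega(A_{oj})$ (resp.\ $\pi_\Omega(A_{pc})$) modulo $P$-null sets, and the projection therefore has positive probability.

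For necessity, I would first note that $A_{oj}$ and $A_{pc}$ are optional subsets of $\Omega\times[0,T)$: under \eqref{eq:exbeta} the process $\beta$ is c\`adl\`ag and $(\cF_t)$-adapted, so both $\beta$ and its left-limit process $\beta_-$ are optional, hence $(1-\beta_-)(1-\beta)$ is optional and its level sets intersected with $\Omega\times[0,T)$ are optional. Assume now $P(\pi_\Omega(A_{oj}))>0$ and apply the optional section theorem to $A_{oj}$ with $\eps:=\tfrac12 P(\pi_\Omega(A_{oj}))$. This produces a stopping time $\tau'\colon\Omega\to[0,\infty]$ whose graph is contained in $A_{oj}\cup(\Omega\times\{\infty\})$ and which satisfies $P(\tau'<\infty)\geq\tfrac12 P(\pi_\Omega(A_{oj}))>0$. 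Because $A_{oj}\subseteq\Omega\times[0,T)$, on $\{\tau'<\infty\}$ one automatically has $\tau'<T$ and $(1-\beta_{\tau'-})(1-\beta_{\tau'})<0$. Setting $\tau:=\tau'\wedge T$ then yields a $[0,T]$-valued stopping time with $P(\tau<T)>0$ and the desired strict-inequality property on $\{\tau<T\}$. The premature-closure case is obtained in exactly the same way with $A_{pc}$ in place of $A_{oj}$.

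The only nontrivial step is the invocation of the optional section theorem; the measurability check on $A_{oj}$ and $A_{pc}$ is routine given the c\`adl\`ag adaptedness of $\beta$ provided by \eqref{eq:exbeta}, and no additional probabilistic input is required.
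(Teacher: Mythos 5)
Your proof is correct and follows essentially the same route as the paper, which likewise dispatches the ``if'' direction by noting that the graph of $\tau$ on $\{\tau<T\}$ sits inside $A_{oj}$ (resp.\ $A_{pc}$) and obtains the ``only if'' direction from the optional section theorem applied to these optional sets. Your filling-in of the details (the choice $\eps=\tfrac12 P(\pi_\Omega(A_{oj}))$, the truncation $\tau=\tau'\wedge T$, and the optionality check via the c\`adl\`ag adaptedness of $\beta$) is exactly the standard argument the paper leaves implicit.
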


\cref{lem:131121a1} easily follows from the optional section theorem \cite[Theorem IV.5.5]{revuzyor}, which applies because $A_{oj}$ and $A_{pc}$ are optional sets. We also remark that a simple attempt to define $\tau$ as, say, $T\wedge\inf\{t\in[0,T):(1-\beta_{t-})(1-\beta_t)<0\}$ does not always work, as, for $\omega$ such that $\tau<T$ but the infimum is not attained,
the expression $(1-\beta_{\tau-})(1-\beta_\tau)$ will be zero.

\bigskip
We now turn to the question about new qualitative effects we can get if we allow for negative resilience.
Informally, with positive resilience one will not be able to observe overjumping zero or premature closure in the optimal strategy.
On the contrary, if we allow the resilience to take negative values, then overjumping zero and premature closure in the optimal strategy become possible.
\Cref{lem:positiveresilience} and \Cref{lem:negresclosetoterminaltime}
contain precise mathematical formulations of these statements.
After these propositions we also provide a more detailed informal \cref{disc:131121a1}.

\begin{propo}\label{lem:positiveresilience}
	Assume \eqref{eq:Cgeeps} and \eqref{eq:Cbdd}. 
	
	\smallskip
	(i) We have
	$$
	\wt \beta_.\le \left(1-\frac{\rho_.}{2\rho_.+\mu_.}\right)\mathbf{1}_{\{\rho_.+\mu_.>0\}}\le 1\;\;
	P\times\mu_L\text{-a.e.\ on }
	\{(\omega,t)\in\Omega\times[0,T]:\rho_t(\omega)\ge0\}.
	$$
	
	(ii) Assume \eqref{eq:exbeta} and that $\rho\ge 0$ $P\times\mu_L$-a.e.
	Then overjumping zero is not optimal.
	
	\smallskip
	(iii) Assume \eqref{eq:exbeta} and that there exists an $\cF_T$-measurable random variable $\delta$ such that
	\begin{equation}\label{eq:131121a1}
	\delta>0\;\;P\text{-a.s.\ and }\rho_.\ge\delta\;\;P\times\mu_L\text{-a.e.}
	\end{equation}
	Then neither overjumping zero nor premature closure is optimal.
\end{propo}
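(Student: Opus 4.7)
The plan is to first establish the algebraic bound in (i) by analyzing the formula~\eqref{eq:wtbetawoZ} as a function of $Y_t\in[0,1/2]$, and then in (ii)--(iii) to transfer this bound from $\wt\beta$ to the c\`adl\`ag process $\beta$ via~\eqref{eq:exbeta}, so as to control the sign of $1-\beta$ on all of $[0,T)$ rather than only $P\times\mu_L$-a.e.

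For (i), assumption~\eqref{eq:Cgeeps} makes the denominator in~\eqref{eq:wtbetawoZ} at least $\ol\varepsilon/2>0$ whenever $Y_t\in[0,1/2]$. Writing $\wt\beta_t=f_t(Y_t)$ with $f_t(y)=(\rho_t+\mu_t)y/(\sigma_t^2 y+\frac{1}{2}(2\rho_t+\mu_t-\sigma_t^2))$, a direct calculation gives $f_t'(y)=(\rho_t+\mu_t)\cdot\frac{1}{2}(2\rho_t+\mu_t-\sigma_t^2)/[\sigma_t^2 y+\frac{1}{2}(2\rho_t+\mu_t-\sigma_t^2)]^2$, so the sign of $f_t'$ on $[0,1/2]$ equals that of $\rho_t+\mu_t$. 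Hence on $\{\rho+\mu\le 0\}$ we have $\wt\beta_t\le f_t(0)=0$, while on $\{\rho+\mu>0\}$ we get $\wt\beta_t\le f_t(1/2)=(\rho_t+\mu_t)/(2\rho_t+\mu_t)=1-\rho_t/(2\rho_t+\mu_t)$. On $\{\rho\ge 0\}\cap\{\rho+\mu>0\}$ we also have $2\rho+\mu>0$, so $1-\rho/(2\rho+\mu)\le 1$, giving both claimed inequalities.

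For (ii), part~(i) together with $\rho\ge 0$ $P\times\mu_L$-a.e.\ yields $\wt\beta\le 1$ $P\times\mu_L$-a.e. Combined with~\eqref{eq:exbeta}, Fubini, and the right-continuity of $\beta$, this upgrades to $\beta_t\le 1$ for every $t\in[0,T)$ $P$-a.s.; with the convention $\beta_{0-}=0$, the product $(1-\beta_{t-})(1-\beta_t)\ge 0$ for all $t\in[0,T)$ a.s., so $A_{oj}=\emptyset$ a.s. For (iii), I would sharpen the bound: on $\{\rho+\mu>0\}$, assumption~\eqref{eq:Cbdd} gives $2\rho+\mu\le 3\ol c$, hence $\wt\beta\le 1-\delta/(3\ol c)$; on $\{\rho+\mu\le 0\}$, already $\wt\beta\le 0$. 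Setting $c(\omega):=(\delta(\omega)/(3\ol c))\wedge 1$, which is strictly positive a.s.\ by~\eqref{eq:131121a1}, we obtain $1-\wt\beta\ge c(\omega)$ $P\times\mu_L$-a.e. The same modification-plus-right-continuity argument upgrades this to $1-\beta_t\ge c(\omega)$ for all $t\in[0,T)$ a.s., and passage to left limits yields $1-\beta_{t-}\ge c(\omega)$ for all $t\in(0,T]$ a.s. Together with $1-\beta_{0-}=1$, the product $(1-\beta_{t-})(1-\beta_t)$ is strictly positive for all $t\in[0,T)$ a.s., ruling out both overjumping zero and premature closure.

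The only delicate step is the promotion of the $P\times\mu_L$-a.e.\ bound on $\wt\beta$ to an $\omega$-wise pointwise bound on $\beta$ valid simultaneously for all $t\in[0,T)$; without the c\`adl\`ag modification guaranteed by~\eqref{eq:exbeta}, the projections $\pi_\Omega(A_{oj})$ and $\pi_\Omega(A_{pc})$ could a priori be non-null even when $\wt\beta\le 1$ (resp.\ $<1$) almost everywhere. Everything else reduces to elementary manipulations of the rational function $f_t$.
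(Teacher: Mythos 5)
Your proof is correct and follows essentially the same route as the paper: the bound in (i) is obtained by maximizing the rational function of $Y_t$ over $[0,1/2]$ (the paper derives the same bound by a direct algebraic manipulation of $Y_t\le\tfrac12$, which is equivalent to your monotonicity argument), and (ii)--(iii) use the same Fubini-plus-c\`adl\`ag upgrade of the $P\times\mu_L$-a.e.\ bound $\wt\beta\le 1$ (resp.\ $\le 1-\delta/(3\ol c)$ on $\{\rho+\mu>0\}$) to a pathwise bound on $\beta$ and $\beta_{-}$.
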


In relation with \cref{lem:positiveresilience} we make the following comments.
\begin{enumerate}[(a)]
\item
A rather widespread situation in today's literature on resilient price impact is to assume a constant resilience.
This falls into part~(iii) of \cref{lem:positiveresilience}.
To discuss the assumption in~(iii) in more detail, we remark that, if
\begin{equation}\label{eq:131121a2}
\inf_{t\in[0,T]}\rho_t>0\;\;P\text{-a.s.,}
\end{equation}
then \eqref{eq:131121a1} is satisfied.
Indeed, in this case we can take $\delta=\inf_{t\in[0,T]}\rho_t$ because, by the measurable projection theorem, for all $z\in\bbR$ we have
$$
\{\omega\in\Omega:\inf_{t\in[0,T]}\rho_t<z\}=
\pi_\Omega(\{(\omega,t)\in\Omega\times[0,T]:\rho_t(\omega)<z\})\in\cF_T,
$$
i.e., $\delta:=\inf_{t\in[0,T]}\rho_t$ is $\cF_T$-measurable. More precisely, \eqref{eq:131121a1} is slightly weaker than \eqref{eq:131121a2} and can be, in fact, equivalently expressed as follows: there exists an $\cF_T\otimes\cB([0,T])$-measurable $\wt\rho$ such that $\wt\rho=\rho$ $P\times\mu_L$-a.e.\ and $\inf_{t\in[0,T]}\wt\rho_t>0$ $P$-a.s.

\item
The observation in part~(iii) of \cref{lem:positiveresilience} is in line with \cite{HorstKivman2021}, where in a different but related setting (with a positive stochastically varying resilience) it is observed that the optimal strategy never changes its sign (see \cite[Theorem~2.2]{HorstKivman2021}), which means in our terminology that neither overjumping zero nor premature closure is optimal.

\item
Comparison of (ii) and (iii) poses the question if premature closure can be optimal with nonnegative resilience.
The answer is affirmative:
e.g., if $\rho \equiv 0$, then $\beta_t =1$ for all $t \in [0,T]$,
and the optimal strategy is to close the position immediately
(cf.\ \cite[Proposition 3.7]{ackermann2020cadlag}).
This is, however, a rather degenerate example.
A much more interesting one,
for which we, however, allow the resilience to be negative,
is presented in \cref{sec:premclosure}.
\end{enumerate}

\begin{proof}[Proof of \cref{lem:positiveresilience}]
	(i) Define
	\begin{align*}
	B=\{(\omega,t)\in\Omega\times[0,T]:\quad&Y_t(\omega)\in[0,1/2],\\
	&2\rho_t(\omega)+\mu_t(\omega)-\sigma_t^2(\omega)>0,\\
	&\rho_t(\omega)\ge0\}
	\end{align*}
	and observe that $B\in\cF_T\otimes\cB([0,T])$.
	It is enough to show the claim for every $(\omega,t)\in B$.
	To this end, we fix an arbitrary $(\omega,t)\in B$.
	By \eqref{eq:wtbetawoZ} we have to show that
	\begin{equation}\label{eq:prop32_toshow}
	\frac{(\rho_t(\omega) + \mu_t(\omega))Y_t(\omega)}{\sigma_t^2(\omega) Y_t(\omega) + \frac12 (2\rho_t(\omega) + \mu_t(\omega) - \sigma_t^2(\omega) )}\le \left(\frac{\rho_t(\omega)+\mu_t(\omega)}{2\rho_t(\omega)+\mu_t(\omega)}\right)
	\mathbf{1}_{\{\rho_t(\omega)+\mu_t(\omega)>0\}}.
\end{equation}	
	If $\rho_t(\omega)+\mu_t(\omega)\le 0$, this inequality is evident. Therefore we assume $\rho_t(\omega)+\mu_t(\omega)>0$ in the sequel. Note that the fact that $Y_t(\omega)\le \frac{1}{2}$ implies
	\begin{equation*}
	\begin{split}
	Y_t(\omega)-\frac{1}{2}\frac{2\rho_t(\omega) + \mu_t(\omega) - \sigma_t^2(\omega) }{2\rho_t(\omega)+\mu_t(\omega)}
	&\le 
	Y_t(\omega)\left(1-\frac{2\rho_t(\omega) + \mu_t(\omega) - \sigma_t^2(\omega) }{2\rho_t(\omega)+\mu_t(\omega)}\right)\\
	&=
	\frac{\sigma_t^2(\omega)Y_t(\omega)}{2\rho_t(\omega)+\mu_t(\omega)}.
\end{split}	
	\end{equation*}
This shows that
	\begin{equation*}
	\begin{split}
	(\rho_t(\omega)+\mu_t(\omega))Y_t(\omega)\le 
\frac{\rho_t(\omega)+\mu_t(\omega)}{2\rho_t(\omega)+\mu_t(\omega)}	
\left(
\sigma_t^2(\omega)Y_t(\omega)+\frac{1}{2}(2\rho_t(\omega) + \mu_t(\omega) - \sigma_t^2(\omega))
\right)
\end{split}	
	\end{equation*}
	and hence establishes \eqref{eq:prop32_toshow}.
	
	\smallskip
	(ii) We first notice that (i) and \eqref{eq:exbeta} ensure that
	$\beta_.\le 1$ $P\times\mu_L$-a.e.
	As $\beta$ has c\`adl\`ag paths, by the standard Fubini argument, we infer that $P$-a.s.\ it holds:
	for all $t\in [0,T]$, we have $\beta_t\le 1$.
	This shows that overjumping zero is not optimal.
	
	\smallskip
	(iii) It suffices to show that premature closure is not optimal. Define
	\begin{align*}
	C=\{(\omega,t)\in\Omega\times[0,T]:\quad&Y_t(\omega)\in[0,1/2],\\
	&2\rho_t(\omega)+\mu_t(\omega)-\sigma_t^2(\omega)>0,\\
	&\max\{|\rho_t(\omega)|,|\mu_t(\omega)|\}\le\ol c,\\
	&\delta(\omega)>0\text{ and }\rho_t(\omega)\ge\delta(\omega)\},
	\end{align*}
	where $\ol c$ is from \eqref{eq:Cbdd}, and notice that $C\in\cF_T\otimes\cB([0,T])$.
	It follows from (i) that
	$$
	\wt\beta_t(\omega)\le\max\left\{1-\frac{\delta(\omega)}{3\ol c},0\right\}<1\quad\text{for all }(\omega,t)\in C.
	$$
	As $P\times\mu_L((\Omega\times[0,T])\setminus C)=0$ and $\beta$ is c\`adl\`ag, we conclude that $P$-a.s.\ it holds
	$$
	\sup_{t\in [0,T]} \beta_t\le\max\left\{1-\frac{\delta}{3\ol c},0\right\}<1
	$$
	(again by the Fubini argument), and hence premature closure is not optimal.
\end{proof}

In the sequel, for a set $K\subseteq\Omega\times[0,T]$ and $\omega\in\Omega$, we use the notation
$$
K_\omega=\{t\in[0,T]:(\omega,t)\in K\}
$$
for the section of $K$.
We will permanently use the well-known statements that,
if $K\in\cF_T\otimes\cB([0,T])$, then
\begin{itemize}
\item
for any $\omega\in\Omega$, $K_\omega\in\cB([0,T])$,
\item
and the mapping $\omega\mapsto\mu_L(K_\omega)$ is $\cF_T$-measurable.
\end{itemize}

\begin{propo}\label{lem:negresclosetoterminaltime}
	Assume \eqref{eq:Cgeeps}, \eqref{eq:Cbdd}, and \eqref{eq:exbeta}.
	In addition, assume that there exists an $\cF_T$-measurable random variable $\delta$ such that
	\begin{equation}\label{eq:131121b1}
	P\left(\forall n\in\bbN,\;\mu_L\Big(B_\omega\cap\Big[T-\frac1n,T\Big]\Big)>0\right)>0,
	\end{equation}
	where
	\begin{equation}\label{eq:131121b2}
	B=\{(\omega,t)\in\Omega\times[0,T]:\delta(\omega)>0\text{ and }\rho_t(\omega)\le-\delta(\omega)\}\quad(\in\cF_T\otimes\cB([0,T])).
	\end{equation}
	Then overjumping zero or premature closure is optimal.
\end{propo}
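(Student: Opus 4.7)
The plan is, by \cref{lem:131121a1}, to exhibit a stopping time $\tau$ with $P(\tau < T) > 0$ and $(1-\beta_{\tau-})(1-\beta_\tau) \le 0$ on $\{\tau < T\}$, which then sorts itself into overjumping zero or premature closure according as the inequality is strict or an equality. I choose $\tau = \inf\{t \in [0,T] : \beta_t \ge 1\}$, which is a stopping time by the Début theorem under the usual conditions on $(\cF_t)$. For any $\omega$ with $\tau(\omega) < T$, the definition of the infimum combined with the right-continuity of $\beta$ gives $\beta_\tau(\omega) \ge 1$, while $\beta_s(\omega) < 1$ for $s < \tau(\omega)$ forces $\beta_{\tau-}(\omega) \le 1$; hence $(1 - \beta_{\tau-})(1 - \beta_\tau) \le 0$ on $\{\tau < T\}$. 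Thus it suffices to show $P(\tau < T) > 0$.

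The key technical step is the identity $Y_{T-} = \tfrac{1}{2}$ $P$-a.s. Since the driver $f$ of \eqref{eq:BSDEforBM} is bounded on $[0,1/2]$ (see the proof of \cref{propo:existenceBSDEindependentcoeff}), the process $G_t := \int_0^t f(s, Y_s) \, ds$ is absolutely continuous and $G_T = \lim_{n\to\infty} G_{T - 1/n}$ $P$-a.s. Each $G_{T - 1/n}$ is $\cF^\perp_{T-1/n}$-measurable, so $G_T$ is $\cF^\perp_{T-}$-measurable. From the integrated form of the BSDE $Y_t = Y_0 - G_t + M^\perp_t$ at $t=T$ one obtains $M^\perp_T = \tfrac{1}{2} - Y_0 + G_T$, which is therefore $\cF^\perp_{T-}$-measurable. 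Consequently $M^\perp_{T-} = E[M^\perp_T \mid \cF^\perp_{T-}] = M^\perp_T$ $P$-a.s., so $\Delta M^\perp_T = 0$ and $Y_{T-} = Y_0 - G_T + M^\perp_{T-} = Y_T = \tfrac{1}{2}$.

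A direct calculation from \eqref{eq:wtbetawoZ} yields
\begin{equation*}
\wt\beta_t - 1 = \frac{-\rho_t(1 - Y_t) + (Y_t - \tfrac{1}{2})(\mu_t - \sigma_t^2)}{\sigma_t^2 Y_t + \tfrac{1}{2}(2\rho_t + \mu_t - \sigma_t^2)}.
\end{equation*}
The denominator is at least $\ol\varepsilon/2 > 0$ by \eqref{eq:Cgeeps}, and \eqref{eq:Cgeeps}--\eqref{eq:Cbdd} imply $\sigma_t^2 \le 2\rho_t + \mu_t \le 3\ol c$, so $|\mu_t - \sigma_t^2| \le 4\ol c$. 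On $B$ one has $-\rho_t \ge \delta(\omega) > 0$ and $Y_t \le \tfrac{1}{2}$ gives $1 - Y_t \ge \tfrac{1}{2}$, so the first numerator term is bounded below by $\delta(\omega)/2$ while the second is at most $4\ol c \cdot |Y_t - \tfrac{1}{2}|$ in absolute value. For $\omega$ in the $P$-positive event of \eqref{eq:131121b1} intersected with $\{Y_{T-} = \tfrac{1}{2}\}$, I pick $t^*(\omega) < T$ such that $|Y_t(\omega) - \tfrac{1}{2}| < \delta(\omega)/(16\ol c)$ for all $t \in [t^*(\omega), T)$. Then the numerator above is at least $\delta(\omega)/4$ on $B_\omega \cap [t^*(\omega), T)$, a set of positive $\mu_L$-measure by \eqref{eq:131121b1}, so $\wt\beta_t(\omega) > 1$ there. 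Since $\wt\beta = \beta$ $P\times\mu_L$-a.e., Fubini then yields $\beta_t(\omega) > 1$ for some $t < T$ on an event of positive $P$-probability, whence $P(\tau < T) > 0$.

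The main obstacle I foresee is establishing $Y_{T-} = \tfrac{1}{2}$ $P$-a.s.: it cannot be read off directly from the BSDE, because a priori the martingale $M^\perp$ could jump at $T$. The resolution rests on the observation that the integrated driver $G_T$ is $\cF^\perp_{T-}$-measurable, and this fine filtration-theoretic point is precisely what turns the $P\times\mu_L$-a.e.\ identification of $\wt\beta$ with $\beta$ into the pointwise statement $\beta_{t^*} > 1$ at a time $t^* < T$, on which the stopping-time argument relies.
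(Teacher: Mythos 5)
Your proof is correct and follows essentially the same route as the paper's: establish $\lim_{t\uparrow T}Y_t=\tfrac12$ $P$-a.s., deduce from \eqref{eq:wtbetawoZ} and \eqref{eq:Cgeeps}--\eqref{eq:Cbdd} that $\wt\beta>1$ on a subset of $B$ of positive $P\times\mu_L$-measure near $T$, and conclude via the first hitting time of the level $1$ by $\beta$ together with $\beta_{0-}=0$ and right-continuity. The only substantive difference is that the paper simply cites \cite[Lemma~8.1]{ackermann2020cadlag} for the fact that $M^\perp$ does not jump at $T$, whereas you give a correct self-contained argument via the $\cF^\perp_{T-}$-measurability of the integrated driver.
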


\begin{discussion}\label{disc:131121a1}
(a) The meaning of \eqref{eq:131121b1} is that, with positive probability,
resilience $\rho$ is assumed to be negative with positive Lebesgue measure in any neighbourhood of the terminal time $T$.

\smallskip
(b) It is instructive to compare \cref{lem:negresclosetoterminaltime} with part~(iii) of \cref{lem:positiveresilience}.
The assumptions are ``almost'' complementary:
compare \eqref{eq:131121a1} with \eqref{eq:131121b1}--\eqref{eq:131121b2}.
In both cases, we step a little away from $0$
(this is the role of $\delta$ in \eqref{eq:131121a1} and \eqref{eq:131121b2})
but in a ``soft'' sense (the bound $\delta$ can depend on $\omega$).

\smallskip
(c) In view of (a) and (b) we informally summarize part~(iii) of \cref{lem:positiveresilience} and \cref{lem:negresclosetoterminaltime} as follows.
Positive resilience implies that neither overjumping zero nor premature closure is optimal;
negative resilience ``close to $T$\,'' implies optimality of overjumping zero or premature closure.
There arises the question of whether negative resilience ``far from $T$\,'' also implies overjumping zero or premature closure.
The answer is negative: see \cref{ex:betasmaller1althoughnegres} below.
\end{discussion}

\begin{proof}[Proof of \cref{lem:negresclosetoterminaltime}]
\textbf{1.}
In the first step of the proof we establish that \eqref{eq:Cgeeps}, \eqref{eq:Cbdd}, and~\eqref{eq:131121b1} imply $P\times\mu_L(C)>0$, where
$$
C=\{(\omega,t)\in\Omega\times[0,T]:\wt\beta_t(\omega)>1\}
\quad(\in\cF_T\otimes\cB([0,T])).
$$
To this end, we first recall from \cite[Lemma~8.1]{ackermann2020cadlag} that $\lim_{s\uparrow T}Y_s=Y_T$ ($=\frac12$) $P$-a.s., i.e., for the solution $(Y,M^\perp)$ of \eqref{eq:BSDEforBM}, the orthogonal to $W$ martingale $M^\perp$ does not jump at terminal time $T$. We define
\begin{align*}
M=\{(\omega,t)\in\Omega\times[0,T]:\quad&
\lim_{s\uparrow T}Y_s(\omega)=Y_T(\omega)=\frac12,\\
&Y_t(\omega)\ge0,\\[1mm]
&2\rho_t(\omega)+\mu_t(\omega)-\sigma_t^2(\omega)>0,\\[1mm]
&\max\{|\rho_t(\omega)|,|\mu_t(\omega)|\}\le\ol c\},
\end{align*}
where $\ol c$ is from \eqref{eq:Cbdd}, and notice that $M\in\cF_T\otimes\cB([0,T])$, $P\times\mu_L((\Omega\times[0,T])\setminus M)=0$.
Now we set
$$
K=B\cap M,
$$
where $B$ is from \eqref{eq:131121b2}, and observe that \eqref{eq:131121b1} holds with $B$ replaced by $K$.
As $P\times\mu_L(C)=\int_\Omega \mu_L(C_\omega)\,P(d\omega)$,
we get $P\times\mu_L(C)>0$, once we prove
\begin{equation}\label{eq:131121b3}
F:=\left\{\omega\in\Omega:\forall n\in\bbN,\;\mu_L\Big(K_\omega\cap\Big[T-\frac1n,T\Big]\Big)>0\right\}
\subseteq\{\omega\in\Omega:\mu_L(C_\omega)>0\}.
\end{equation}
To establish \eqref{eq:131121b3}, we fix an arbitrary $\omega_0\in F$ and make the following simple observation
$$
t\in K_{\omega_0}
\;\;\Longleftrightarrow\;\;
(\omega_0,t)\in M\text{ and }\rho_t(\omega_0)\le-\delta(\omega_0)<0.
$$
This yields that, for $t\in K_{\omega_0}$, it holds
$$
\mu_t(\omega_0)-\sigma_t^2(\omega_0)\le|\mu_t(\omega_0)|\le\ol c,
$$
hence
\begin{equation}\label{eq:131121b5}
0<2\rho_t(\omega_0)+\mu_t(\omega_0)-\sigma_t^2(\omega_0)
<\rho_t(\omega_0)+\mu_t(\omega_0)-\sigma_t^2(\omega_0)
\le\ol c-\delta(\omega_0).
\end{equation}
Now we compute from \eqref{eq:wtbetawoZ} that, for $t\in K_{\omega_0}$, we have the equivalence
\begin{equation}\label{eq:131121b6}
\wt\beta_t(\omega_0)>1
\;\;\Longleftrightarrow\;\;
2Y_t(\omega_0)>1+\frac{\rho_t(\omega_0)}{\rho_t(\omega_0)+\mu_t(\omega_0)-\sigma_t^2(\omega_0)}.
\end{equation}
Moreover, \eqref{eq:131121b5} and \eqref{eq:131121b6} reveal that, for $t\in K_{\omega_0}$,
\begin{equation}\label{eq:131121b7}
2Y_t(\omega_0)>1-\frac{\delta(\omega_0)}{\ol c-\delta(\omega_0)}
\;\;\Longrightarrow\;\;
\wt\beta_t(\omega_0)>1
\quad(\Longleftrightarrow t\in C_{\omega_0}).
\end{equation}
Recalling that $\omega_0\in F$, the definition of the event $F$ in \eqref{eq:131121b3}, and that $\lim_{s\uparrow T}Y_s(\omega_0)=\frac12$
(as $\omega_0\in F$ implies that there exists $t\in[0,T]$ with $(\omega_0,t)\in K\subseteq M$),
we conclude from \eqref{eq:131121b7} that there exists $n_0\in\bbN$ (which depends on $\omega_0$) such that
$$
K_{\omega_0}\cap\Big[T-\frac1{n_0},T\Big]\subseteq C_{\omega_0},
$$
hence $\mu_L(C_{\omega_0})\ge\mu_L(K_{\omega_0}\cap[T-1/{n_0},T])>0$.
We thus proved \eqref{eq:131121b3} and completed the first step of the proof.

\smallskip
\textbf{2.}
The first step together with \eqref{eq:exbeta} yields $P\times\mu_L(\beta_.>1)>0$. Define the stopping time
$\tau=T\wedge\inf\{t\in[0,T]:\beta_t>1\}$ (as usual, $\inf\emptyset:=\infty$).
As $P\times\mu_L(\beta_.>1)>0$, we get, by the Fubini argument, that $P(\tau<T)>0$.
Since $\beta_{0-}=0$ and $\beta$ is c\`adl\`ag, $P$-a.s.\ on $\{\tau<T\}$ it holds
$\beta_{\tau-}\le1$ and $\beta_\tau\ge1$, which yields the result.
\end{proof}

\section{Case studies on the effects of negative resilience}\label{sec:cs}

In this section we analyze the effects of negative resilience and discuss the results of \Cref{lem:positiveresilience} and \Cref{lem:negresclosetoterminaltime} in several subsettings of \Cref{sec:prob_form}.

\subsection{A case study with piecewise constant resilience and deterministic optimal strategies}\label{sec:examplesoverjumpingzerosigmazero}

In this subsection we assume that there are $N$ different regimes of resilience. That is to say that $\rho$ is piecewise constant. Moreover, we assume that $\rho$ is deterministic, $\mu>0$ is constant and $\sigma\equiv 0$. These assumptions lead to deterministic optimal strategies. We summarize the results in the following proposition.

\begin{propo}\label{prop:piecewise_const_res}
Assume that $\gamma_0>0$ is deterministic and that\footnote{This assumption is only for ease of exposition. All statements hold also in the case $x-\frac{d}{\gamma_0}<0$ with the suitable adjustments.} $x-\frac{d}{\gamma_0}>0$. Suppose furthermore that $\sigma\equiv 0$, that $\mu>0$ is a deterministic constant, and that $\rho \colon [0,T]\to (-\mu/2,\infty)$ is piecewise constant in the sense that there exist $N\in \N$, $\rho^{(1)},\ldots, \rho^{(N)} \in (-\mu/2,\infty)$, and $0=T_0<T_1<\ldots<T_N=T$ such that for all $t\in [0,T)$ it holds 
$
\rho_t=\sum_{i=1}^N\rho^{(i)}\mathbf{1}_{[T_{i-1},T_i)}(t).
$
Then, \eqref{eq:Cgeeps} and \eqref{eq:Cbdd} are satisfied. The unique solution of \eqref{eq:BSDEforBM} is given by
\begin{equation}\label{eq:251121a1}
Y_t  = e^{(T-t)\mu} 
		\Bigg( 
		\sum_{i=n(t)+1}^N \frac{(\rho^{(i)}+\mu)^2e^{T\mu}}{\mu(\rho^{(i)} + \frac12\mu)}
		(e^{-(t\vee T_{i-1})\mu}-e^{-T_i\mu}) + 2
		\Bigg)^{-1}, \quad M^\perp_t=0, \quad  t \in [0,T],
\end{equation}
where $n(t)=\max\{i\in \{0,\ldots, N\}\colon T_i\le t\}$. Moreover, \eqref{eq:exbeta} is satisfied with 
$
	\beta_t = \wt\beta_t = \frac{\rho_t + \mu}{\rho_t + \frac12\mu} Y_t$, $t \in [0,T].
$
The optimal strategy $X^*$ and the associated deviation $D^*$ are deterministic, for every $i\in \{1,\ldots, N\}$ they are continuous on $(T_{i-1},T_i)$, and for every $i\in \{1,\ldots, N-1\}$ they have a jump at $T_i$ if and only if $\rho$ has a jump at $T_i$.
Furthermore, for every $i\in \{1,\ldots, N\}$ the deviation $D^*$ is constant on $(T_{i-1},T_i)$ and takes negative values, and the optimal strategy $X^*$ is monotone on $(T_{i-1},T_i)$: more precisely, if $\rho^{(i)}>0$ (resp., $\rho^{(i)}<0$; resp., $\rho^{(i)}=0$), then $X^*$ is strictly decreasing (resp., strictly increasing; resp., constant) on $(T_{i-1},T_i)$.
\end{propo}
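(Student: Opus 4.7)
First I check that \eqref{eq:Cgeeps} and \eqref{eq:Cbdd} hold --- this is immediate from the piecewise constant form of $\rho$ and the hypothesis $\rho^{(i)}>-\mu/2$ (take $\ol\eps=\min_i(2\rho^{(i)}+\mu)$ and $\ol c=\mu\vee\max_i|\rho^{(i)}|$). Since $\sigma\equiv 0$ and every coefficient is deterministic, I expect a deterministic solution $(y,0)$ of \eqref{eq:BSDEforBM}, so I search for a deterministic $y\colon[0,T]\to[0,1/2]$ satisfying, on each constancy interval of $\rho$,
\[
\dot y_t=\frac{(\rho_t+\mu)^2 y_t^2}{\rho_t+\mu/2}-\mu\, y_t,\qquad y_T=\tfrac12,
\]
together with continuity at the $T_i$. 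Substituting $z_t=1/y_t$ linearises this to $\dot z_t-\mu z_t=-(\rho^{(i)}+\mu)^2/(\rho^{(i)}+\mu/2)$ on $[T_{i-1},T_i)$ with $z_T=2$; integrating backwards and pasting at each $T_i$ produces exactly the reciprocal of the expression in \eqref{eq:251121a1}. I then verify $y\in(0,1/2]$: positivity is clear term by term, while $y\le 1/2$ follows after telescoping from the elementary inequality $(\rho^{(i)}+\mu)^2\ge 2\mu(\rho^{(i)}+\mu/2)$, i.e.\ $(\rho^{(i)})^2\ge 0$. Thus $(y,0)$ is a solution of \eqref{eq:BSDEforBM}, and the uniqueness part of \cref{theo:summary} both certifies \eqref{eq:251121a1} and forces $M^\perp=0$.

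\textbf{Paragraph 2.} Next, since $Y$ is continuous and $\rho$ is piecewise constant, formula \eqref{eq:wtbetawoZ} --- which with $\sigma\equiv 0$ simplifies to $\wt\beta_t=(\rho_t+\mu)Y_t/(\rho_t+\mu/2)$ --- defines a deterministic c\`adl\`ag process of finite variation, so \eqref{eq:exbeta} holds with $\beta=\wt\beta$ and \cref{theo:summary} applies. With $\sigma\equiv 0$ the stochastic exponential
\[
\cE(Q)_t=\exp\Big(-\int_0^t\beta_s(\mu+\rho_s)\,ds\Big)
\]
is deterministic and absolutely continuous, and so are $\gamma_t=\gamma_0 e^{\mu t}$ and $Y_t$. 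Hence \eqref{eq:optstrat} and \eqref{eq:optdev} yield deterministic $X^*$ and $D^*$ whose only discontinuities on $[0,T)$ come from jumps of $\beta$. Because $Y_t>0$ and the map $\rho\mapsto(\rho+\mu)/(\rho+\mu/2)$ is strictly decreasing (hence injective) on $(-\mu/2,\infty)$, $\beta$, $X^*$ and $D^*$ jump at an internal breakpoint $T_i$, $i\in\{1,\ldots,N-1\}$, iff $\rho$ does. On each $(T_{i-1},T_i)$ the three factors $\rho^{(i)}+\mu$, $\rho^{(i)}+\mu/2$ and $Y_t$ are strictly positive, so $\beta_t>0$; combined with $\gamma_t>0$ and $x-d/\gamma_0>0$, this gives $D^*_t<0$ via \eqref{eq:optdev}.

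\textbf{Paragraph 3.} Finally, the remaining claims reduce to showing that $D^*$ is constant on each $(T_{i-1},T_i)$. This is the main obstacle of the proof, as it is the only non-trivial algebraic identity: I differentiate $\gamma_t\beta_t\cE(Q)_t$, use $\dot\beta_t=(\rho^{(i)}+\mu)(\rho^{(i)}+\mu/2)^{-1}\dot y_t$ together with the Riccati equation for $y$ to simplify, and observe that the derivative collapses to zero --- in effect, the formula $\beta=(\rho+\mu)Y/(\rho+\mu/2)$ is precisely the one making this product stationary, which is ultimately the reason it emerges as the optimiser. Once $D^*$ is constant on $(T_{i-1},T_i)$, the deviation dynamics \eqref{eq:def_deviation} (with $d[\gamma,X^*]=0$, because $\gamma$ is continuous of finite variation) force $dX^*_t=\rho^{(i)}D^*_t\gamma_t^{-1}\,dt$, so $\dot X^*_t$ has the sign opposite to $\rho^{(i)}$ (using $D^*_t<0$, $\gamma_t>0$), giving exactly the three monotonicity cases in the statement. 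Continuity of $X^*$ and $D^*$ on $(T_{i-1},T_i)$ follows from continuity of $\gamma,Y,\beta,\cE(Q)$ there, completing the argument.
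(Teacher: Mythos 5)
Your proof is correct and follows essentially the same route as the paper: explicit solution of the Bernoulli ODE (you derive \eqref{eq:251121a1} by the substitution $z=1/y$, the paper simply verifies it), uniqueness from \cref{theo:summary}, and the key identity $d(\gamma_t\beta_t\cE(Q)_t)=0$ to show that $D^*$ is constant between breakpoints. The only (harmless) deviation is at the end: where the paper computes $d((1-\beta_t)\cE(Q)_t)=-\cE(Q)_t\beta_t\rho^{(i)}\,dt$ directly to read off the monotonicity of $X^*$, you feed the constancy of $D^*$ back into the deviation dynamics \eqref{eq:def_deviation} to obtain $dX^*_t=\rho^{(i)}D^*_t\gamma_t^{-1}\,dt$ --- an equivalent and slightly more conceptual argument.
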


\begin{proof}
Clearly, \eqref{eq:Cgeeps} and \eqref{eq:Cbdd} are satisfied. Next note that
$Y$ from \eqref{eq:251121a1} satisfies for all $t\in [0,T]$ that
\begin{equation*}
	\begin{split}
		Y_t & = e^{(T-t)\mu} 
		\Bigg( 
		\int_t^T \frac{(\rho_s+\mu)^2}{\rho_s + \frac12\mu} e^{(T-s)\mu} ds + 2
		\Bigg)^{-1}.
	\end{split}
\end{equation*}
From this it follows that $Y$ is continuous and satisfies the Bernoulli ODE
\begin{equation*}
	d Y_t = \frac{(\rho_t + \mu)^2}{\rho_t + \frac12 \mu } Y_t^2 dt - \mu Y_t dt, \quad t \in [0,T], \quad Y_T=\frac12.
\end{equation*}
Consequently, $(Y,0)$ is the unique solution of \eqref{eq:BSDEforBM}. Moreover, $\wt \beta$ defined by \eqref{eq:wtbetawoZ} is c\`adl\`ag and of finite variation and thus we have \eqref{eq:exbeta} with $\beta=\wt\beta$. In particular, $\beta$ is deterministic, and since $\sigma \equiv 0$ and $\rho,\mu$ are deterministic, we have that the optimal strategy $X^*$ and its deviation $D^*$ are deterministic as well. 

For every $i\in \{1,\ldots, N-1\}$ observe also that $\beta$ has a jump at $T_i$ if and only if $\rho$ has a jump at $T_i$. This directly translates into jumps of the optimal strategy $X^*$ and jumps of the associated deviation $D^*$ via \eqref{eq:optstrat} and \eqref{eq:optdev}.
To show that the deviation $D^*$ is constant on each $(T_{i-1},T_i)$, $i \in \{1,\ldots, N\}$, observe that for all $i \in \{1,\ldots,N\}$ and $t \in (T_{i-1},T_i)$ it holds that 
\begin{equation}\label{eq:dbeta}
	\begin{split}
		d\beta_t & = \frac{\rho^{(i)} + \mu}{\rho^{(i)} + \frac12 \mu} \left( \frac{(\rho^{(i)} + \mu)^2}{\rho^{(i)} + \frac12 \mu} Y_t^2 - \mu Y_t \right) dt 
		= \beta_t^2 (\rho^{(i)} + \mu) dt - \mu \beta_t dt
	\end{split}
\end{equation}
and hence
\begin{equation*}
	\begin{split}
		d\left( \gamma_t \beta_t \cE(Q)_t \right) 
		& = \beta_t d(\gamma_t \cE(Q)_t) 
		+ \gamma_t \cE(Q)_t d\beta_t \\
		& = \beta_t \gamma_t \cE(Q)_t \left( \mu - \beta_t (\mu + \rho^{(i)}) \right) dt  
		+ \gamma_t \cE(Q)_t \left( \beta_t^2 (\rho^{(i)} + \mu) - \mu \beta_t \right) dt 
		\\
		& = 0. 
	\end{split}
\end{equation*}
It thus follows from \eqref{eq:optdev} that $D^*$ is constant on $(T_{i-1},T_i)$ for $i \in \{1,\ldots,N\}$.  
Moreover, since $\rho>-\frac12\mu$, $\mu>0$, and $Y>0$, it holds that $\beta> 0$, and therefore $D^* < 0$ (recall that we assume $x-\frac{d}{\gamma_0}>0$). 
Next note that
we have for all $i \in \{1,\ldots, N\}$ and $t \in (T_{i-1},T_i)$ that, using \eqref{eq:dbeta},  
\begin{equation*}
	\begin{split}
		d((1-\beta_t)\cE(Q)_t) & = (1-\beta_t) d\cE(Q)_t - \cE(Q)_t d\beta_t \\
		& = - \cE(Q)_t (1-\beta_t) \beta_t \left( \mu + \rho^{(i)} \right) dt 
		- \cE(Q)_t \left( \beta_t^2 (\rho^{(i)} + \mu) - \mu \beta_t \right) dt \\
		& = -\cE(Q)_t \beta_t \rho^{(i)} dt .
	\end{split}
\end{equation*}
Since $\beta > 0$ and $x-\frac{d}{\gamma_0}>0$, we conclude that if $\rho^{(i)}$ is positive, then $X^*$ in \eqref{eq:optstrat} is decreasing on $(T_{i-1},T_i)$, and if $\rho^{(i)}$ is negative, then $X^*$ is increasing on $(T_{i-1},T_i)$, $i \in\{1,\ldots, N\}$.
\end{proof}

In \Cref{ex:betasmaller1althoughnegres}, \Cref{ex:prematureclosureonlyattimepoint}, and \Cref{ex:rholess0equalsbetagreater1} below we consider the setting of \Cref{prop:piecewise_const_res} with $N=3$ different regimes of resilience. More precisely, we assume in the sequel of this subsection the setting of \Cref{prop:piecewise_const_res} with $N=3$, $x=1$, $d=0$, $\gamma_0=1$, $\mu=0.5$, and $T_i=i$ for $i\in\{1,2,3\}$.

We already know from \Cref{lem:negresclosetoterminaltime} that overjumping zero or premature closure is optimal if we have negative resilience in the last regime (i.e., $\rho^{(3)}<0$). In the three examples below we want to analyze under which conditions these effects occur in the case where the resilience is positive in the last (and also the first) regime. 
We choose $\rho^{(1)}=0.1$ and $\rho^{(3)}=1$. \Cref{lem:positiveresilience} entails that we necessarily need $\rho^{(2)}<0$ to see these effects. Therefore we choose a different negative value for $\rho^{(2)}$ in each example.

For these choices of $\rho^{(i)}$, $i\in\{1,2,3\}$, \Cref{prop:piecewise_const_res} shows that it is optimal to first sell during $(0,1)$, change this to a buy program on $(1,2)$ to profit from the negative resilience during that time interval, and then sell again during $(2,3)$. 
Moreover, since $\rho^{(1)}$ and $\rho^{(3)}$ are positive, we can already derive (e.g., by \Cref{lem:positiveresilience}) that $\beta<1$ on $[0,1)$ and on $[2,3)$, and hence that $X^*$ is strictly positive on $[0,1)$ and on $[2,3)$ due to $x-\frac{d}{\gamma_0}=1$. 
Between
\Cref{ex:betasmaller1althoughnegres},  \Cref{ex:prematureclosureonlyattimepoint},
and 
\Cref{ex:rholess0equalsbetagreater1}
we vary the size of $\rho^{(2)}<0$. 
This then determines if we get overjumping zero or premature closure for the optimal strategy. 
Recall that $\beta$ in all examples has jumps at $t=1$ and $t=2$ and is continuous on $(0,1)$, $(1,2)$, and $(2,3)$, with values strictly smaller than $1$ on $[0,1)$ and $[2,3)$. 
The facts that $\rho^{(1)}=0.1$, $\mu=0.5$, and $Y_1\in (0,1/2]$ yield that also $\beta_{1-}<1$.
We moreover have that $(1-\beta_{t-})(1-\beta_t)>0$ for all $t \in [0,1)\cup (2,3)$.
This, continuity of $\beta$ on $(1,2)$, $\beta_{1-}<1$, and $\beta_2<1$ imply that overjumping zero is optimal if and only if at least one of 
\begin{equation}\label{eq:condoverjump1}
	\beta_1=\frac{\rho^{(2)}+\frac12}{\rho^{(2)}+\frac14}Y_1>1
\end{equation}
and 
\begin{equation}\label{eq:condoverjump2}
	\beta_{2-}=\frac{\rho^{(2)}+\frac12}{\rho^{(2)}+\frac14}Y_2>1
\end{equation}
is satisfied.
Premature closure is optimal if and only if 
\begin{equation}\label{eq:condpremclosure}
	\text{there exists } t \in [1,2] \text{ such that } (1-\beta_{t-})(1-\beta_t)=0.
\end{equation}
The resilience $\rho$, the function $\beta$, and the optimal strategy $X^*$ for each of the examples below are shown in \Cref{fig:exoverjumpingzerosigmazero}.

\begin{figure}[!htb]
	\centering
	\setkeys{Gin}{width=0.346\linewidth}
	\includegraphics{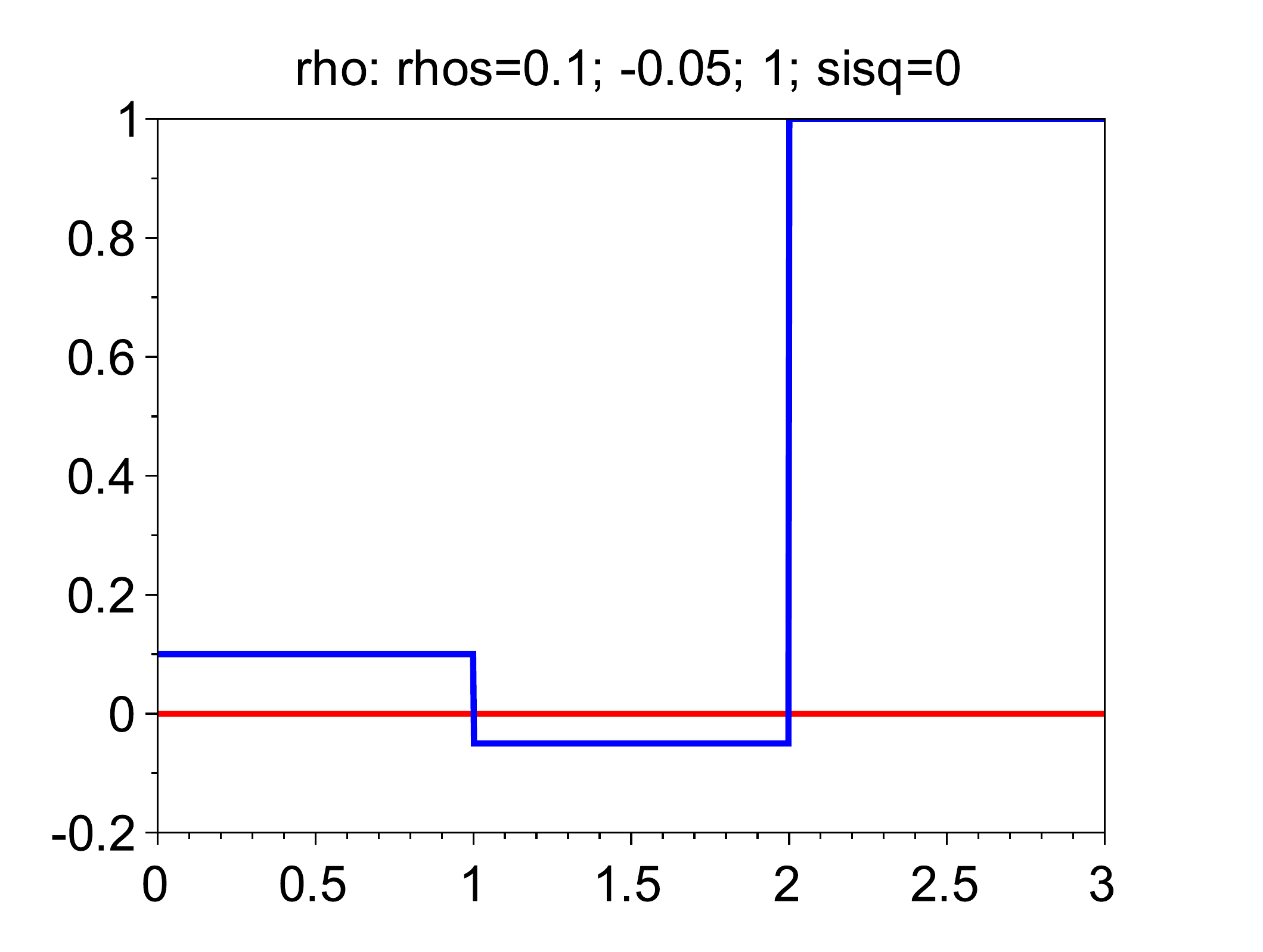}\,	\hspace{-0.5cm}\includegraphics{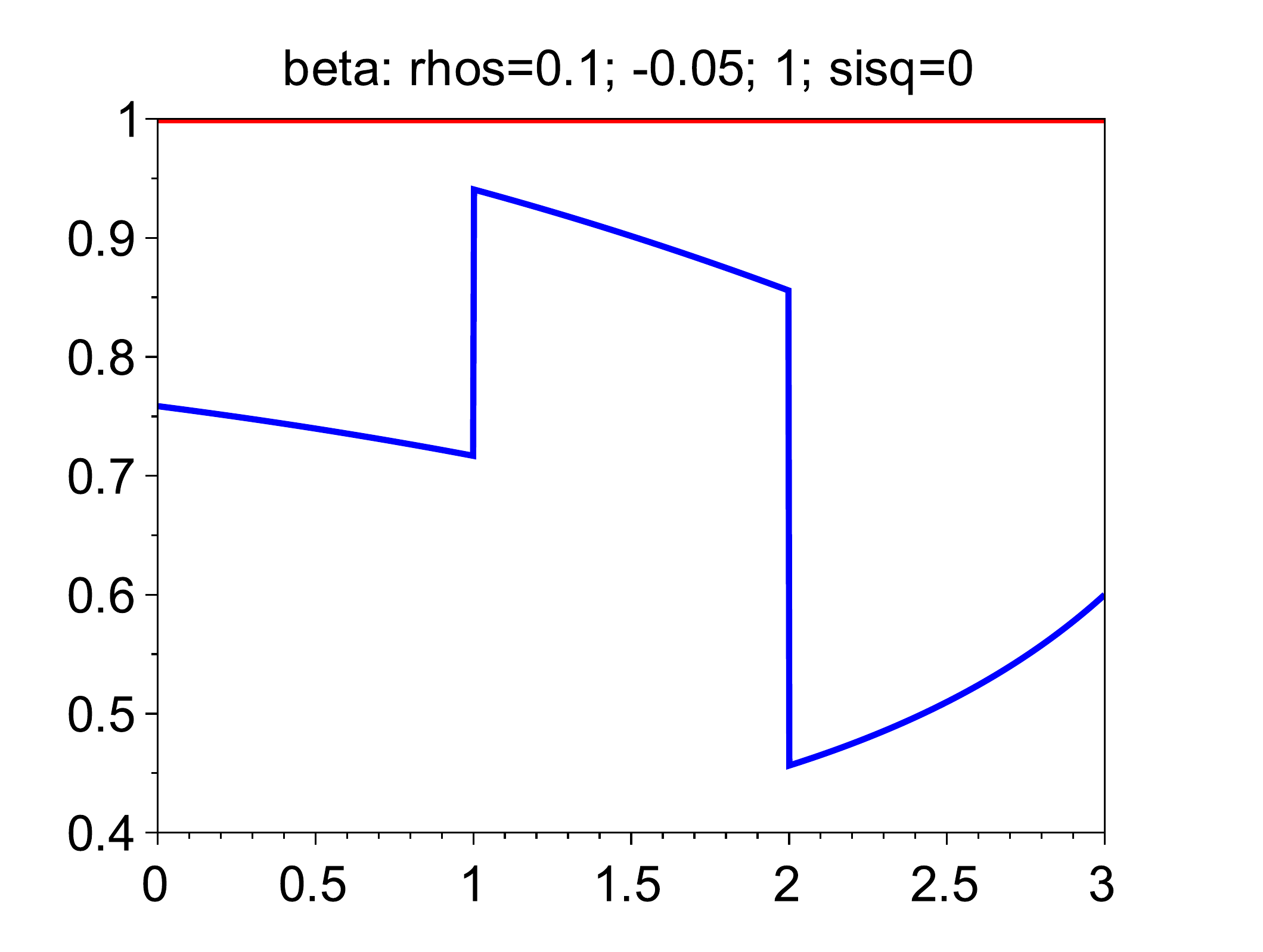}\,	\hspace{-0.5cm}\includegraphics{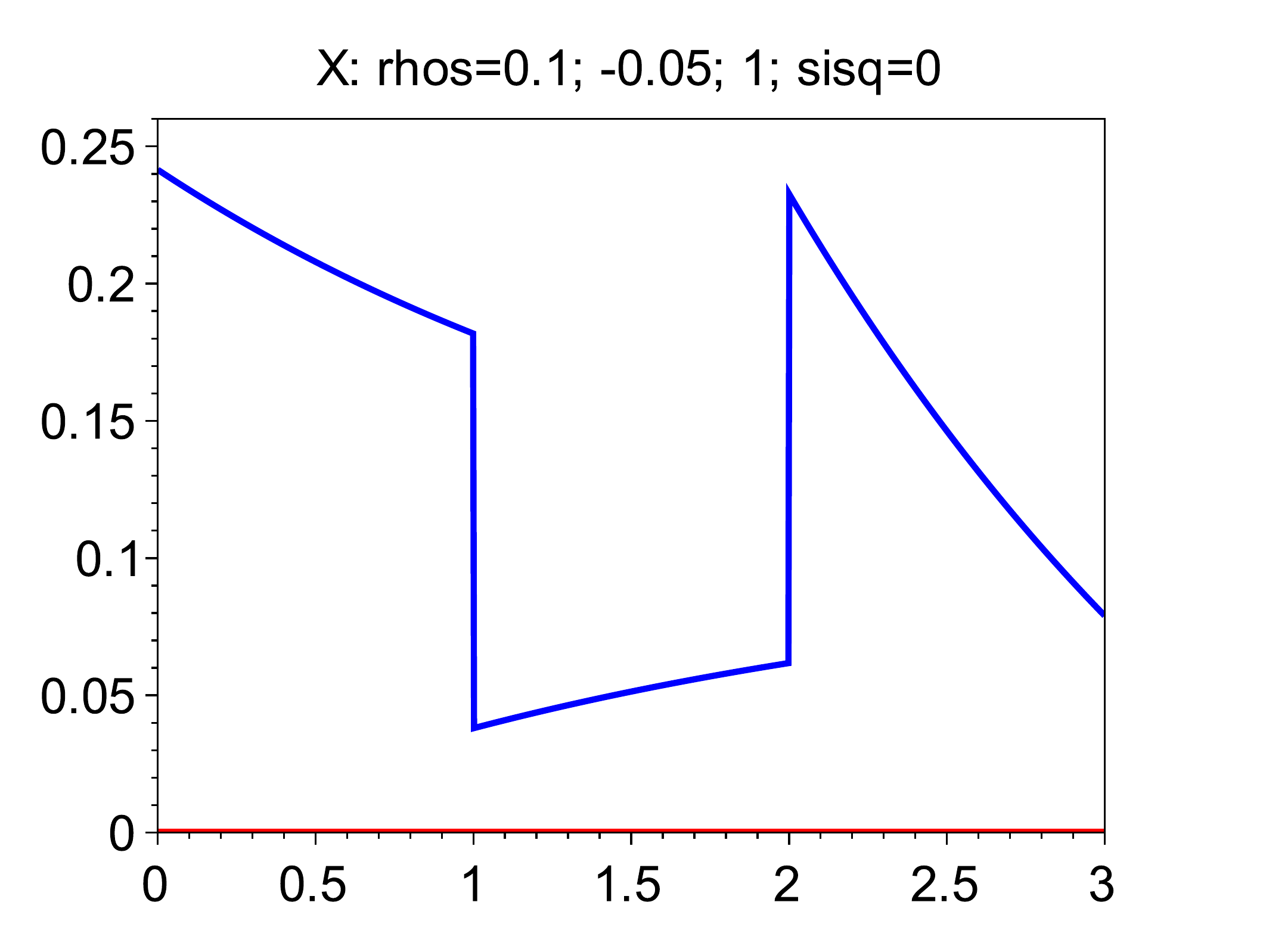} 
	
\includegraphics{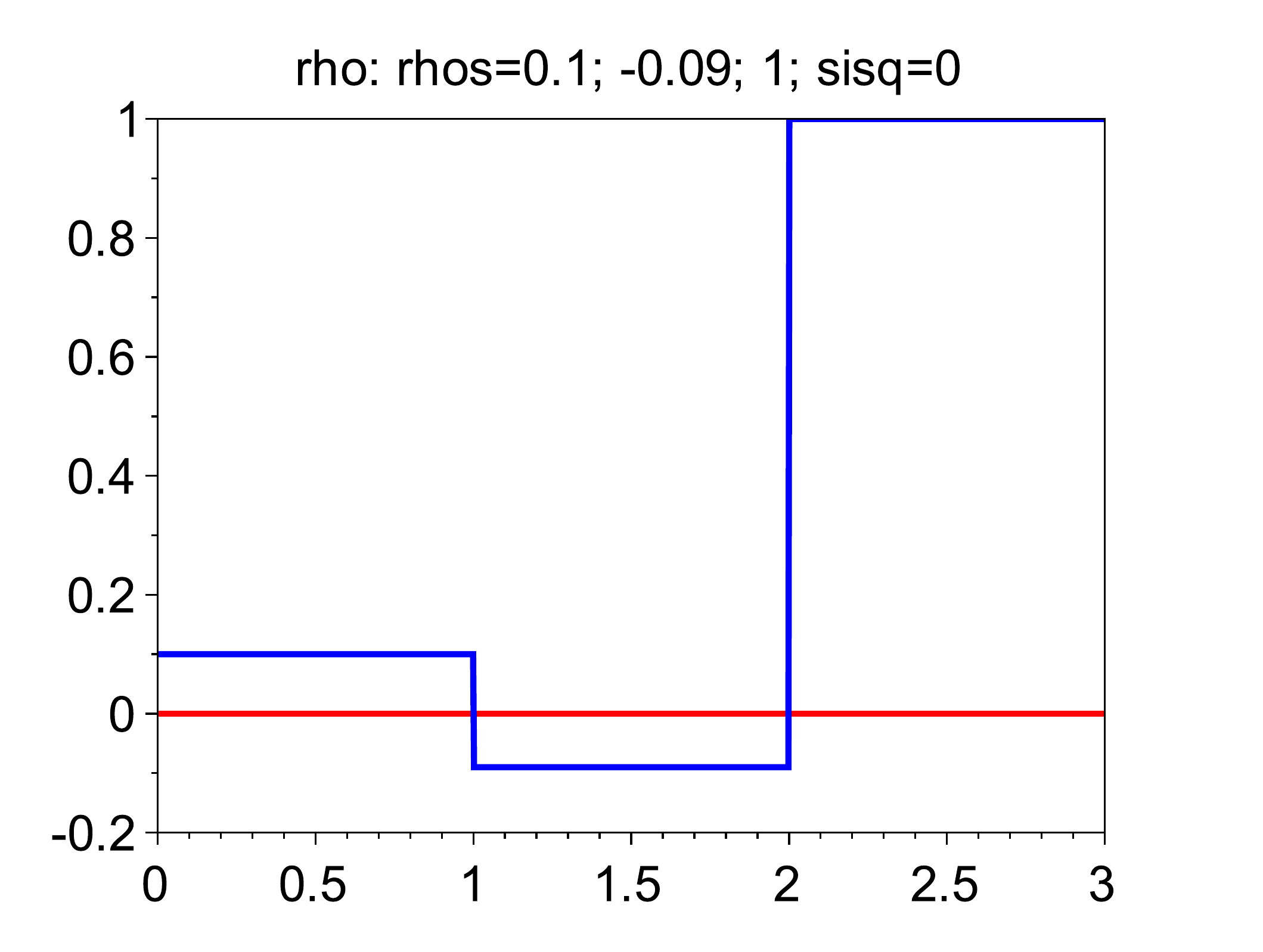}\, 	\hspace{-0.5cm}\includegraphics{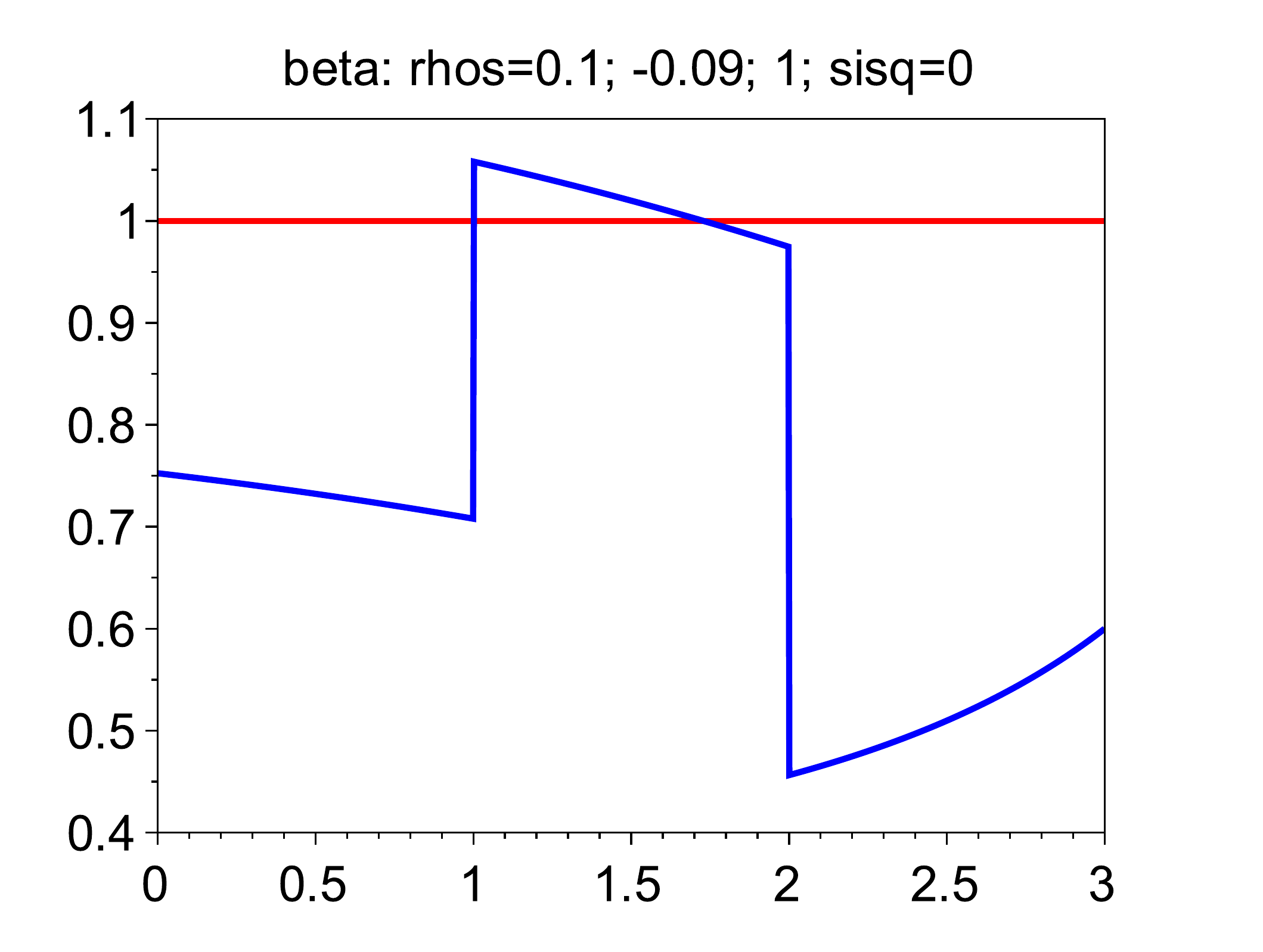}\, 	\hspace{-0.5cm}\includegraphics{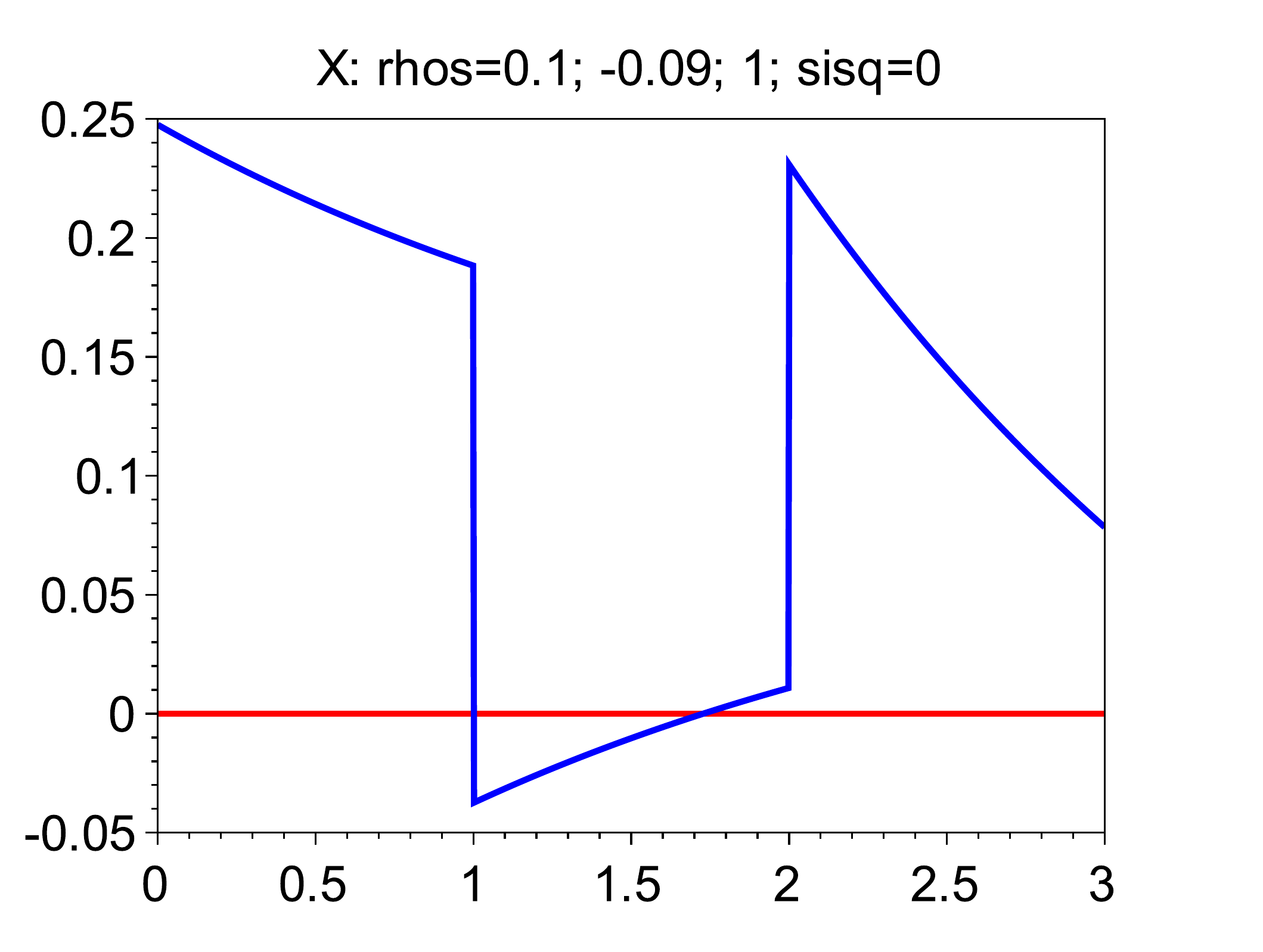}	
	
	\includegraphics{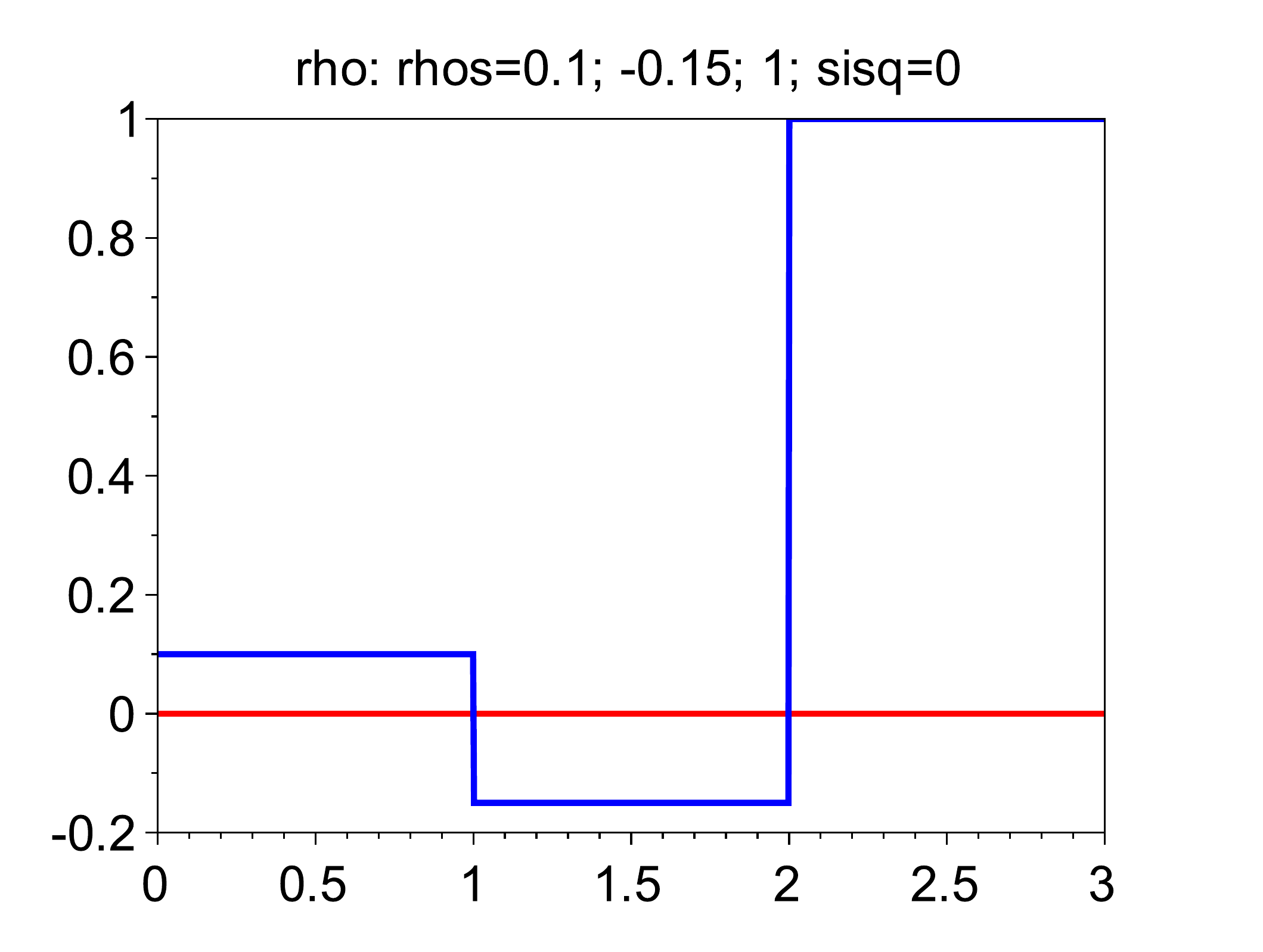}\,	\hspace{-0.5cm}\includegraphics{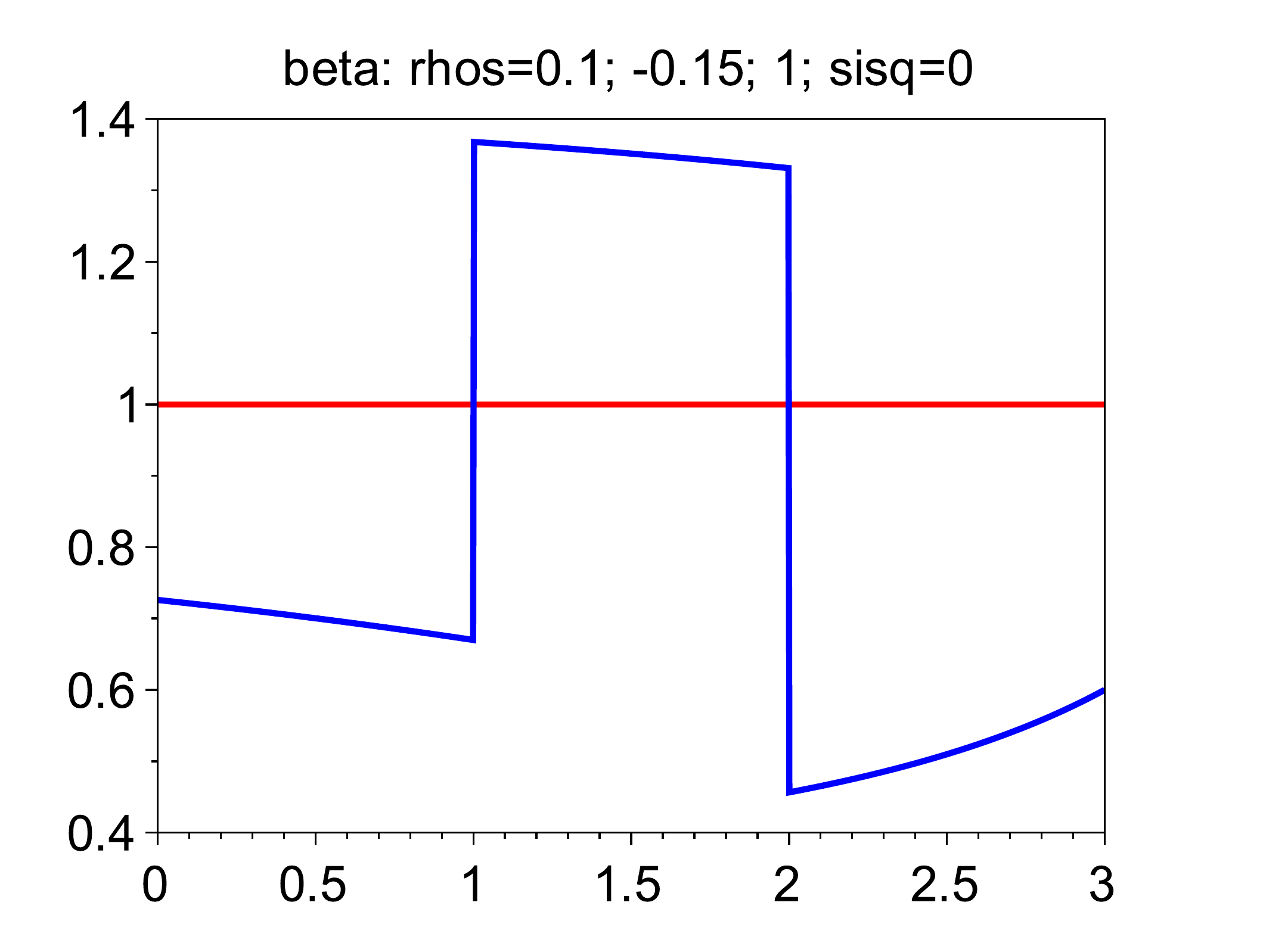}\, 	\hspace{-0.5cm}\includegraphics{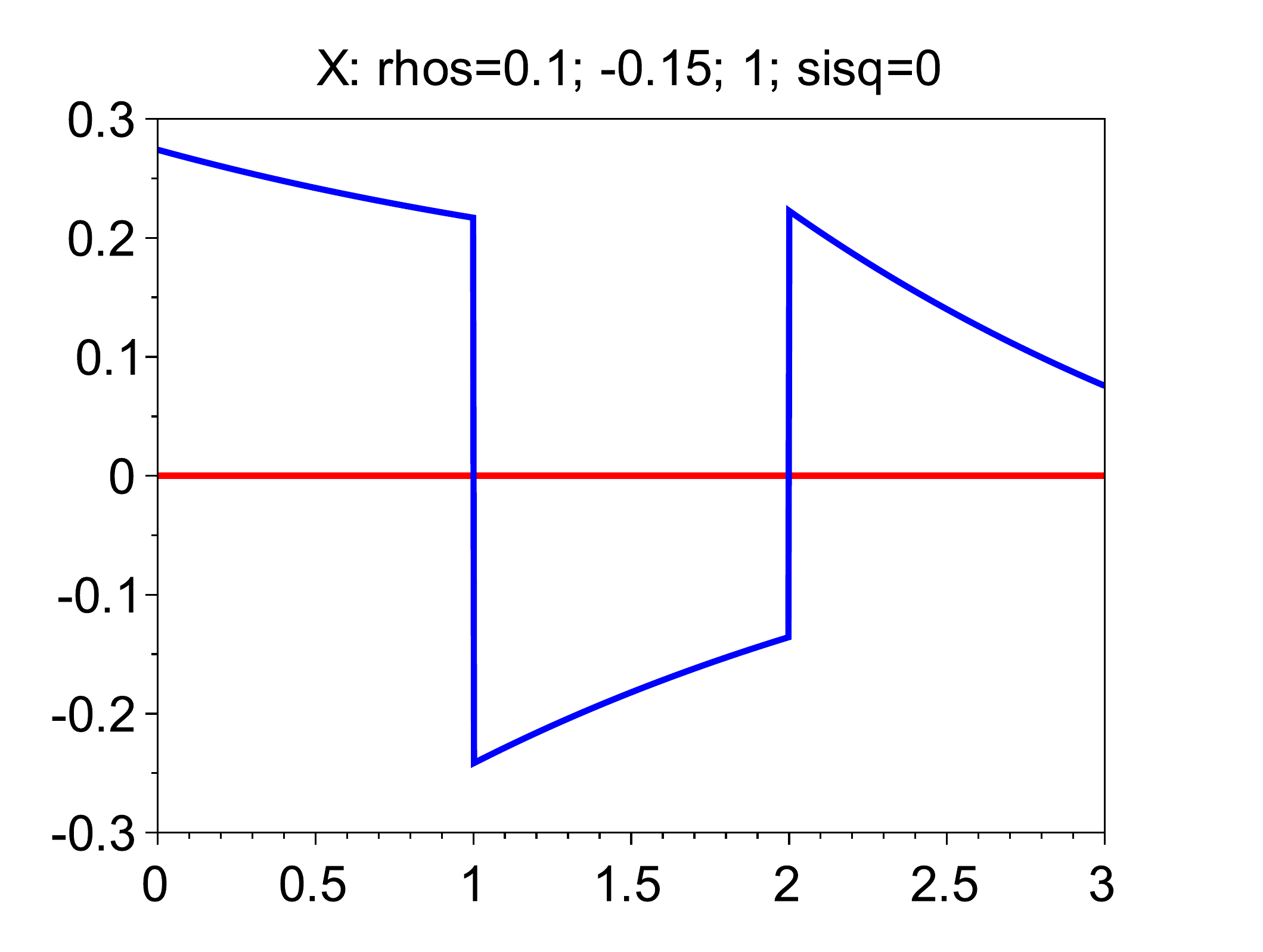} 
	 		
	\caption{Top row: $\rho$, $\beta$, and $X^*$ in \Cref{ex:betasmaller1althoughnegres}. Middle row: $\rho$, $\beta$, and $X^*$ in \Cref{ex:prematureclosureonlyattimepoint}. Bottom row: $\rho$, $\beta$, and $X^*$ in \Cref{ex:rholess0equalsbetagreater1}. }
	\label{fig:exoverjumpingzerosigmazero}
\end{figure}

\begin{ex}\label{ex:betasmaller1althoughnegres}
We choose $\rho^{(2)}=-0.05$. The first row in \Cref{fig:exoverjumpingzerosigmazero} shows that $\beta$ stays strictly smaller than one also on $[1,2)$, and hence the optimal strategy $X^*$ is strictly positive on the time interval $[0,3)$. 
 We conclude that, in general, a period of negative resilience does not necessarily lead to overjumping zero or premature closure. 
\end{ex}

\begin{ex}\label{ex:prematureclosureonlyattimepoint}
We next provide an example where negative resilience indeed leads to overjumping zero and premature closure. To this end we choose  $\rho^{(2)}=-0.09$ in the above set-up. From the second row of \Cref{fig:exoverjumpingzerosigmazero} we observe that $\beta$ jumps above $1$ at time $t=1$, but then decays continuously below $1$ already before its next jump at $t=2$. 
	It therefore holds that \eqref{eq:condoverjump1} and \eqref{eq:condpremclosure} are satisfied. 
	We thus have overjumping zero as well as premature closure for the optimal strategy. 
	This implies (recall $x-\frac{d}{\gamma_0}=1$) that the optimal strategy jumps to a negative value at time $t=1$ and crosses $0$ within the time interval $(1,2)$ to become positive again. Note that the set of points in time $t \in [0,T)$ for which we have $\beta_t>1$ is strictly included in the set where $\rho_t<0$ (which is $[1,2)$).
\end{ex}

\begin{ex}\label{ex:rholess0equalsbetagreater1}
We finally provide an example where the set of points in time $t \in [0,T)$ for which we have $\beta_t>1$ is equal to the set where $\rho_t<0$. 
	This means that the time periods with negative resilience exactly coincide with the time periods where the optimal strategy is negative.
	We achieve this for example for $\rho^{(2)}=-0.15$ in the above set-up (see the third row of \Cref{fig:exoverjumpingzerosigmazero}).
	In particular, \eqref{eq:condoverjump1} is satisfied, i.e., overjumping zero is optimal. 
	Furthermore, one can compute that \eqref{eq:condoverjump2} holds true as well. 
	It follows that condition \eqref{eq:condpremclosure} is not met, and therefore, premature closure is not optimal. 
	Note that the optimal strategy changes its sign twice, but does not continuously cross $0$.
\end{ex}

\subsection{A case study with piecewise constant resilience and stochastic optimal strategies}\label{sec:examplesoverjumpingzerowithsigma}

We here consider a similar setting as in \Cref{sec:examplesoverjumpingzerosigmazero}, but now $\sigma$ can be a deterministic constant different from $0$. 
Although the solution of BSDE~\eqref{eq:BSDEforBM} and the process $\beta=\wt\beta$ are still deterministic, the optimal strategy $X^*$ and its associated deviation $D^*$ in general become stochastic. 
The properties derived in \Cref{prop:piecewise_const_res} that $D^*$ is constant between jumps and that $X^*$ is monotone between jumps then no longer hold. 
However, we can produce the main effects discussed in \Cref{ex:betasmaller1althoughnegres}, \Cref{ex:prematureclosureonlyattimepoint},
and \Cref{ex:rholess0equalsbetagreater1} also in the case with nonzero $\sigma$. 
Let $\sigma= \sqrt{0.1}$, $\mu=0.5$, 
$x=1$, $d=0$, $\gamma_0=1$, $T=3$. Assume
$\rho$ as in \Cref{prop:piecewise_const_res}  with $N=3$, $T_0=0$, $T_1=1$, $T_2=2$, $T_3=T$, $\rho^{(1)}=0.1$, $\rho^{(3)}=1$, and a $\rho^{(2)}<0$ chosen appropriately for each example.
Then, for $\rho^{(2)}=-0.05$, we see that $\beta<1$ everywhere, which implies that neither overjumping zero nor premature closure is optimal (cf.\ the first row of \Cref{fig:exoverjumpingzerowithsigma}). This is just as in \Cref{ex:betasmaller1althoughnegres}. 
In order to obtain the same effect as in \Cref{ex:prematureclosureonlyattimepoint}, we consider $\rho^{(2)}=-0.07$. Then, $\{t\in[0,T)\colon \beta_t>1\}\subsetneq \{t\in[0,T)\colon\rho_t<0\}$, and $\beta$ jumps above $1$ in $t=1$ and goes through $1$ on $(1,2)$ (cf.\ the second row of \Cref{fig:exoverjumpingzerowithsigma}).
Consequently, both overjumping zero and premature closure are optimal in this case.
If we set $\rho^{(2)}=-0.15$, we observe that $\{t\in[0,T)\colon\rho_t<0\}=[1,2)=\{t\in[0,T)\colon\beta_t>1\}$ (cf.\ the third row of \Cref{fig:exoverjumpingzerowithsigma}), and that overjumping zero is optimal, but premature closure is not. This is the analogon of \Cref{ex:rholess0equalsbetagreater1}.

\begin{figure}[!htb]
	\centering
	\setkeys{Gin}{width=0.346\linewidth}
		\includegraphics{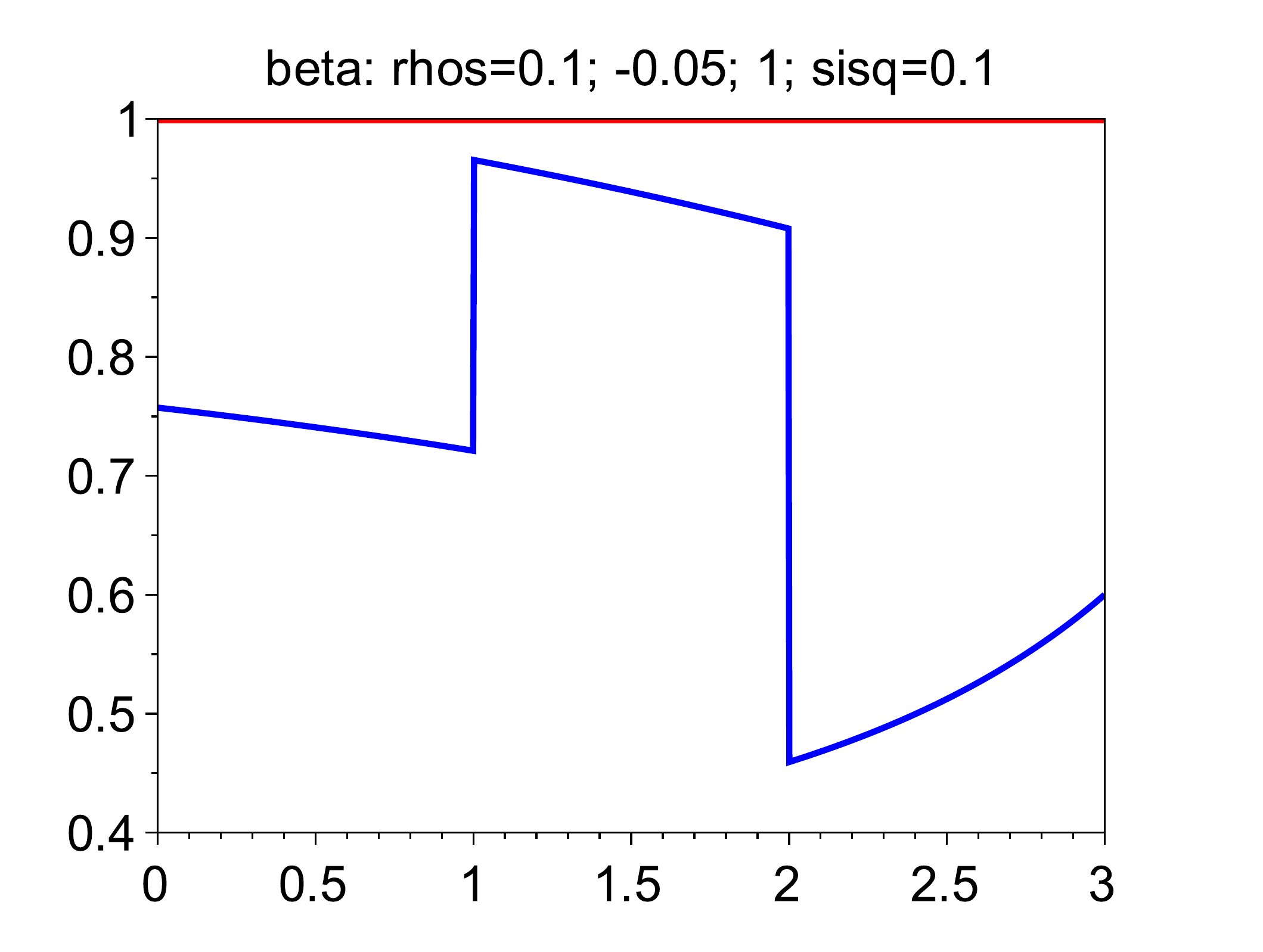}\,	\hspace{-0.5cm}\includegraphics{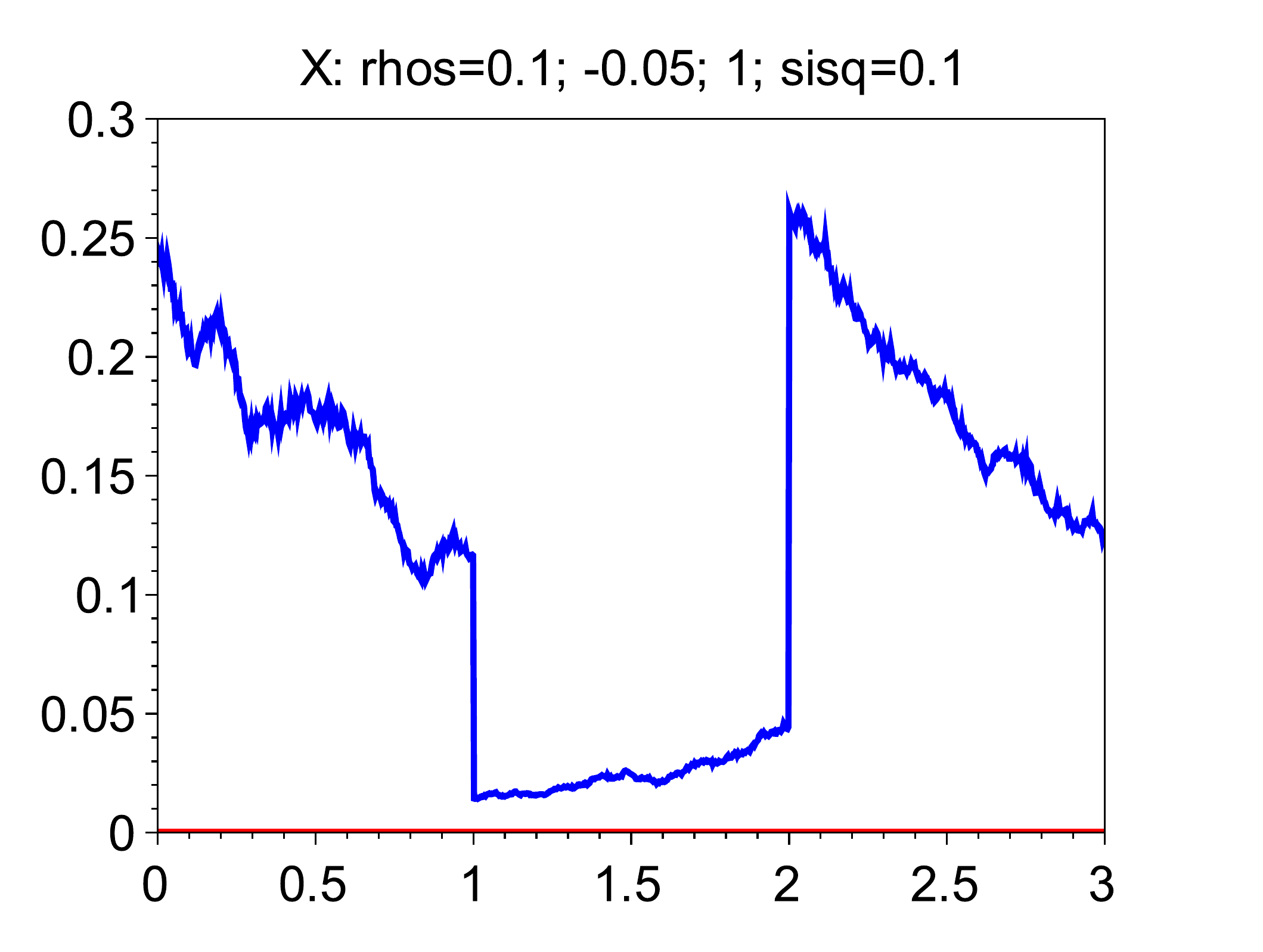}\,	\hspace{-0.5cm}\includegraphics{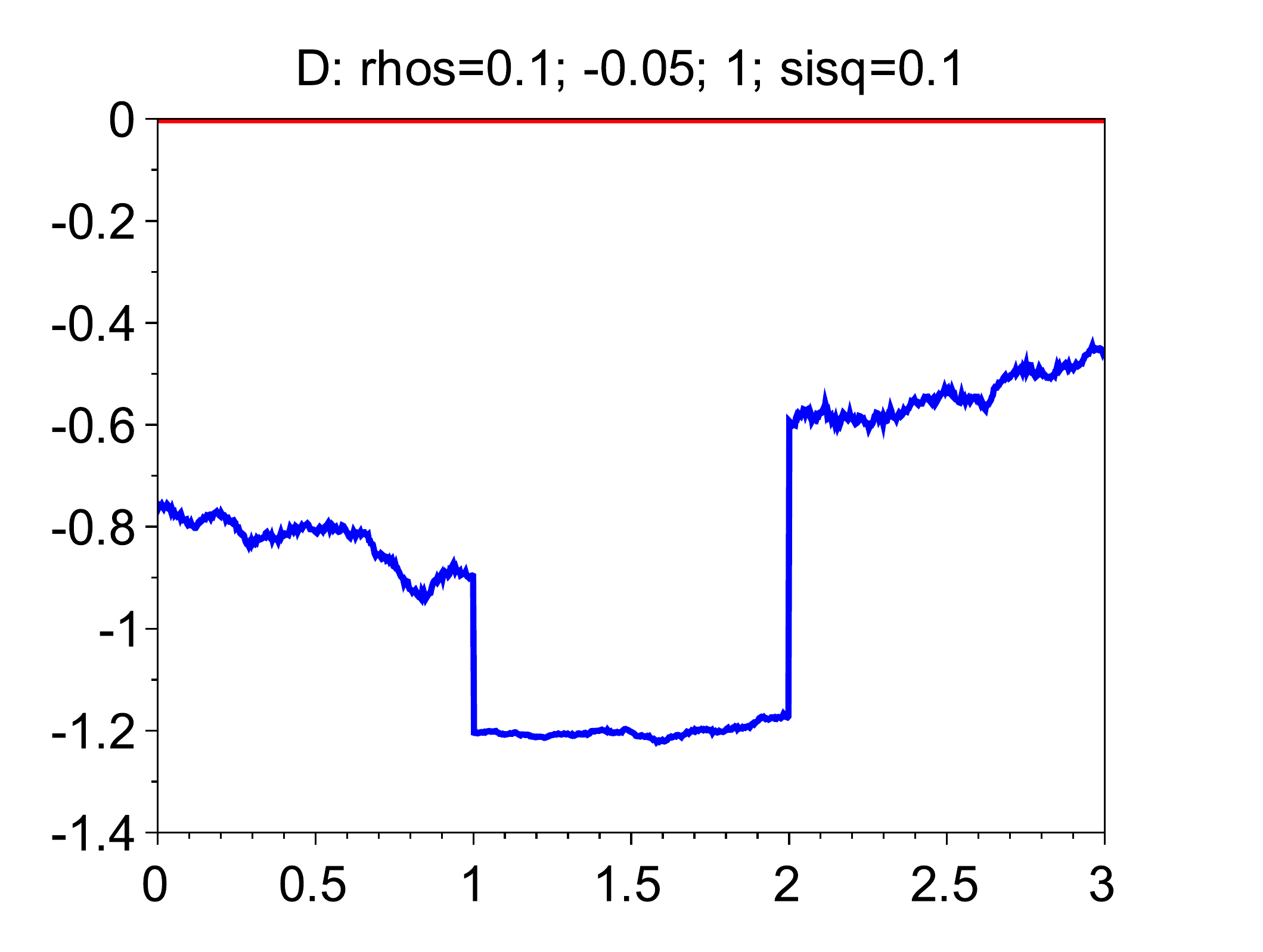} 
		
		\includegraphics{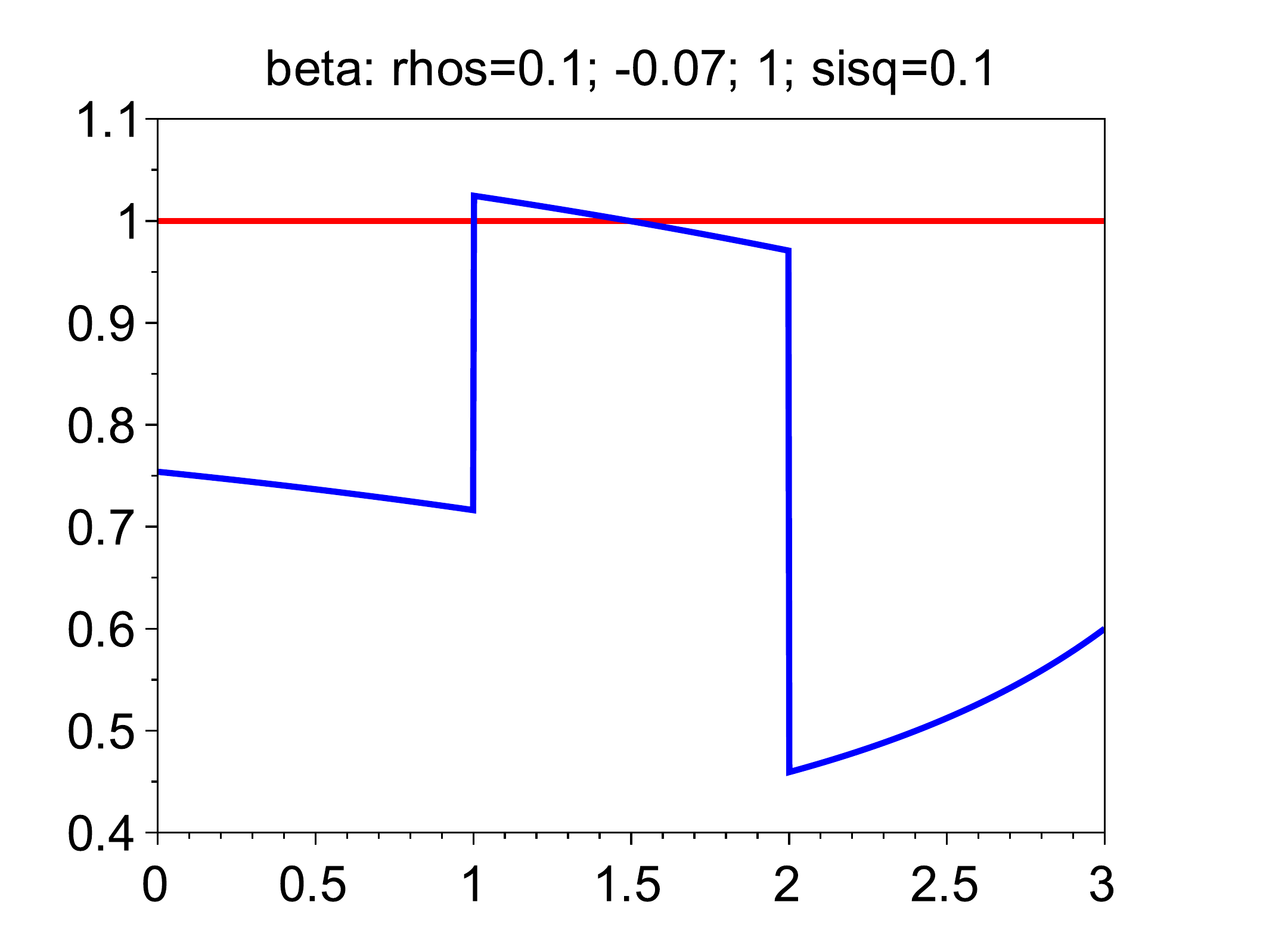}\, 	\hspace{-0.5cm}\includegraphics{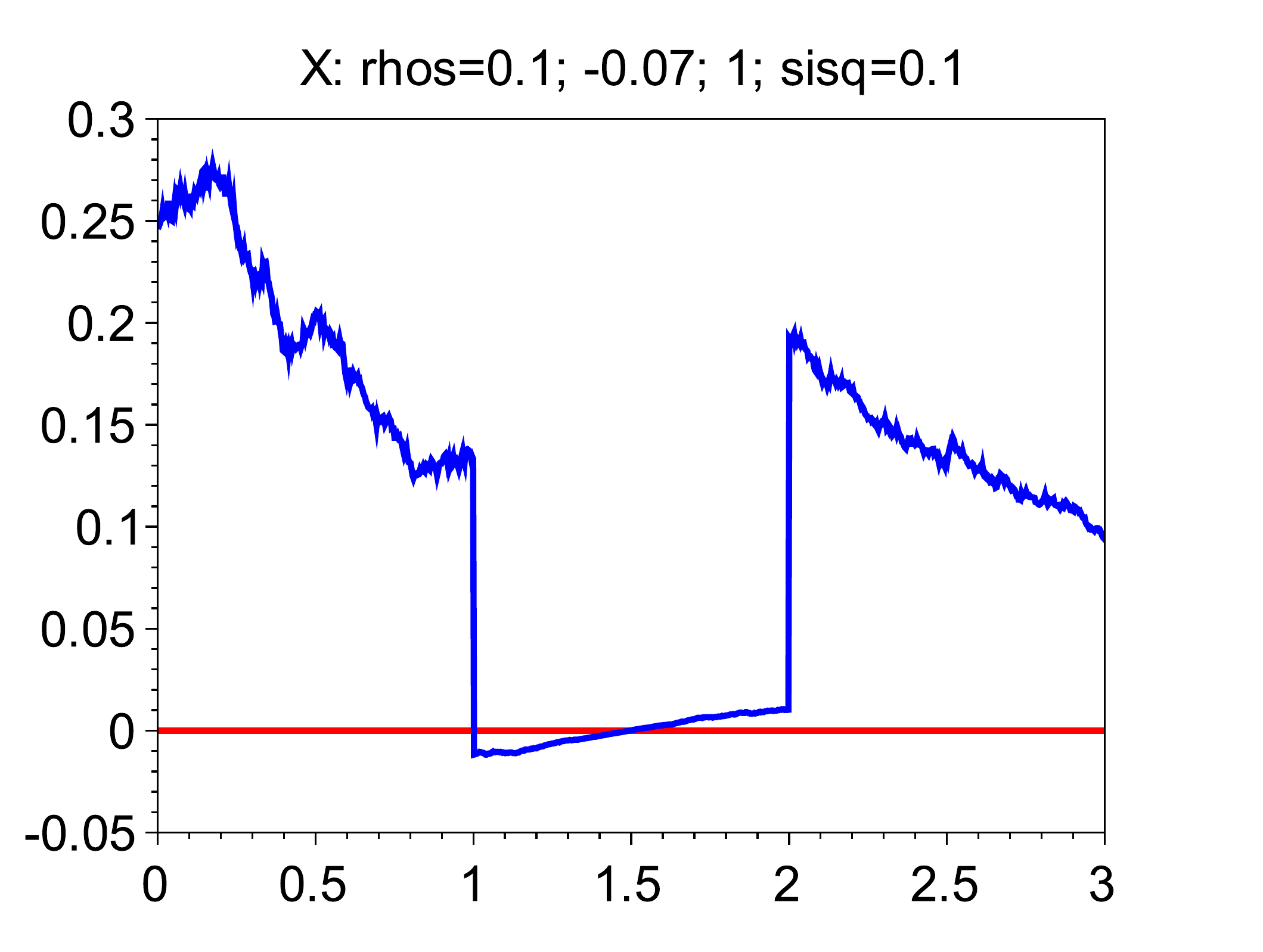}\, 	\hspace{-0.5cm}\includegraphics{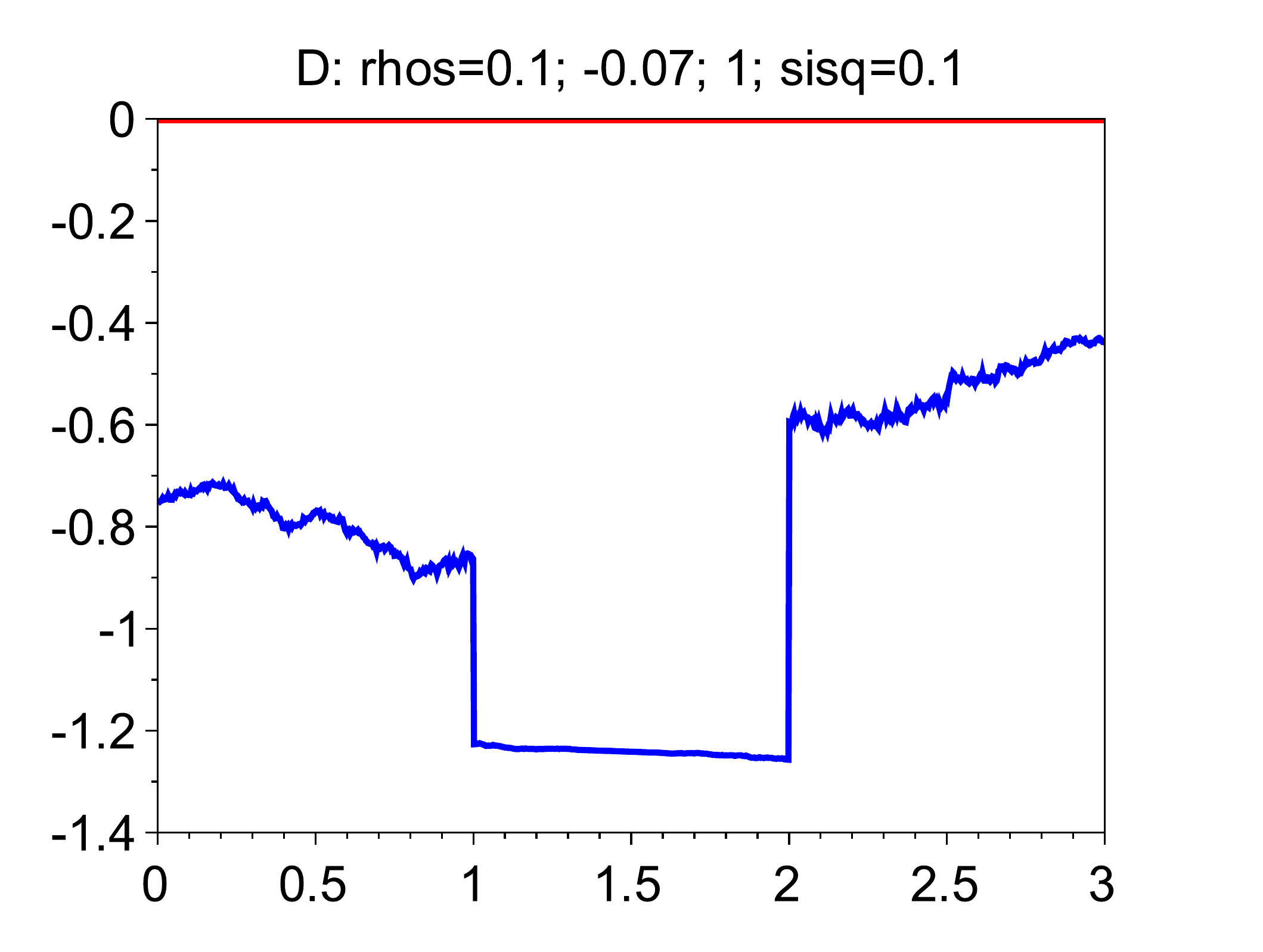} 

\includegraphics{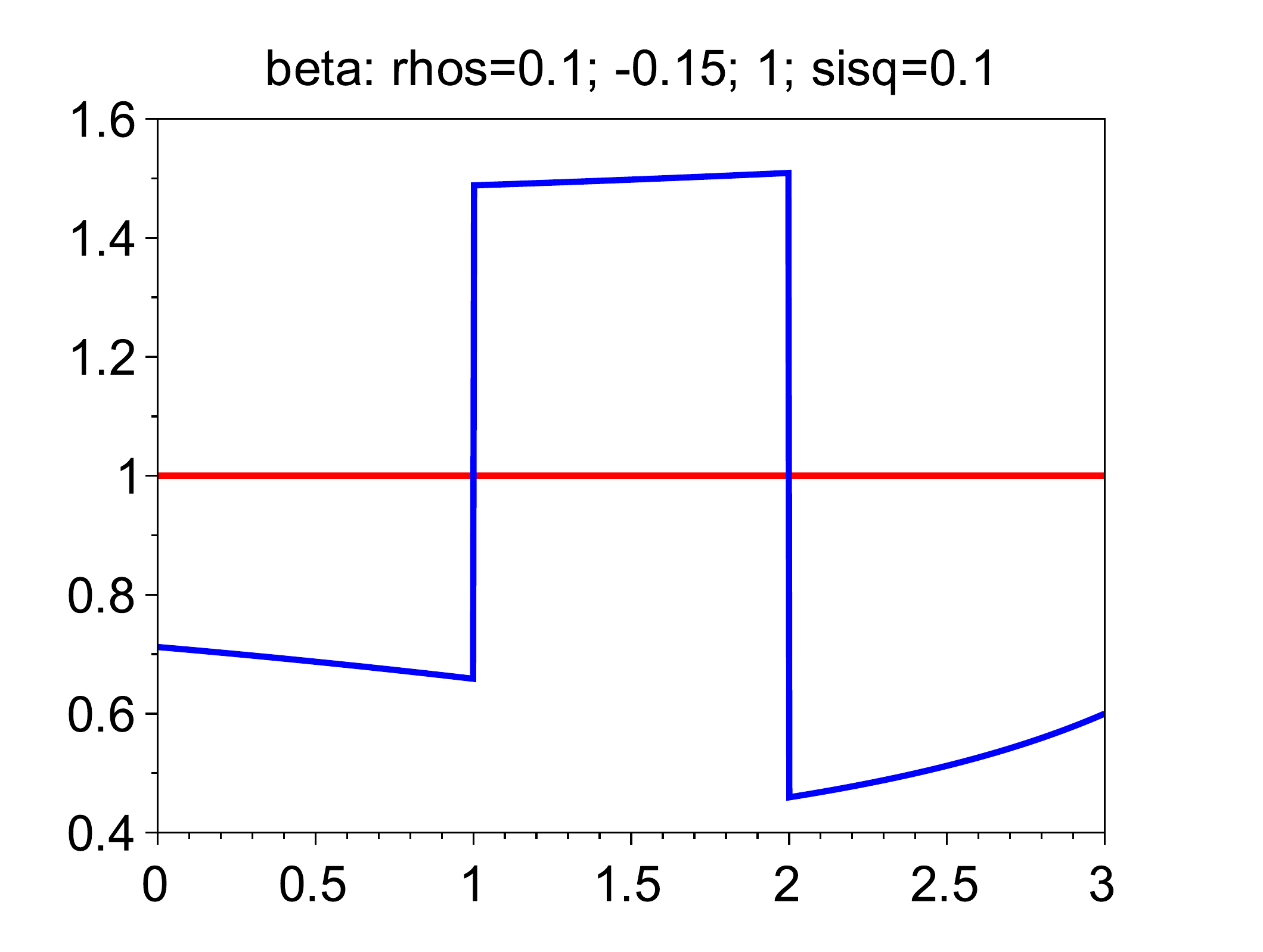}\,	\hspace{-0.5cm}\includegraphics{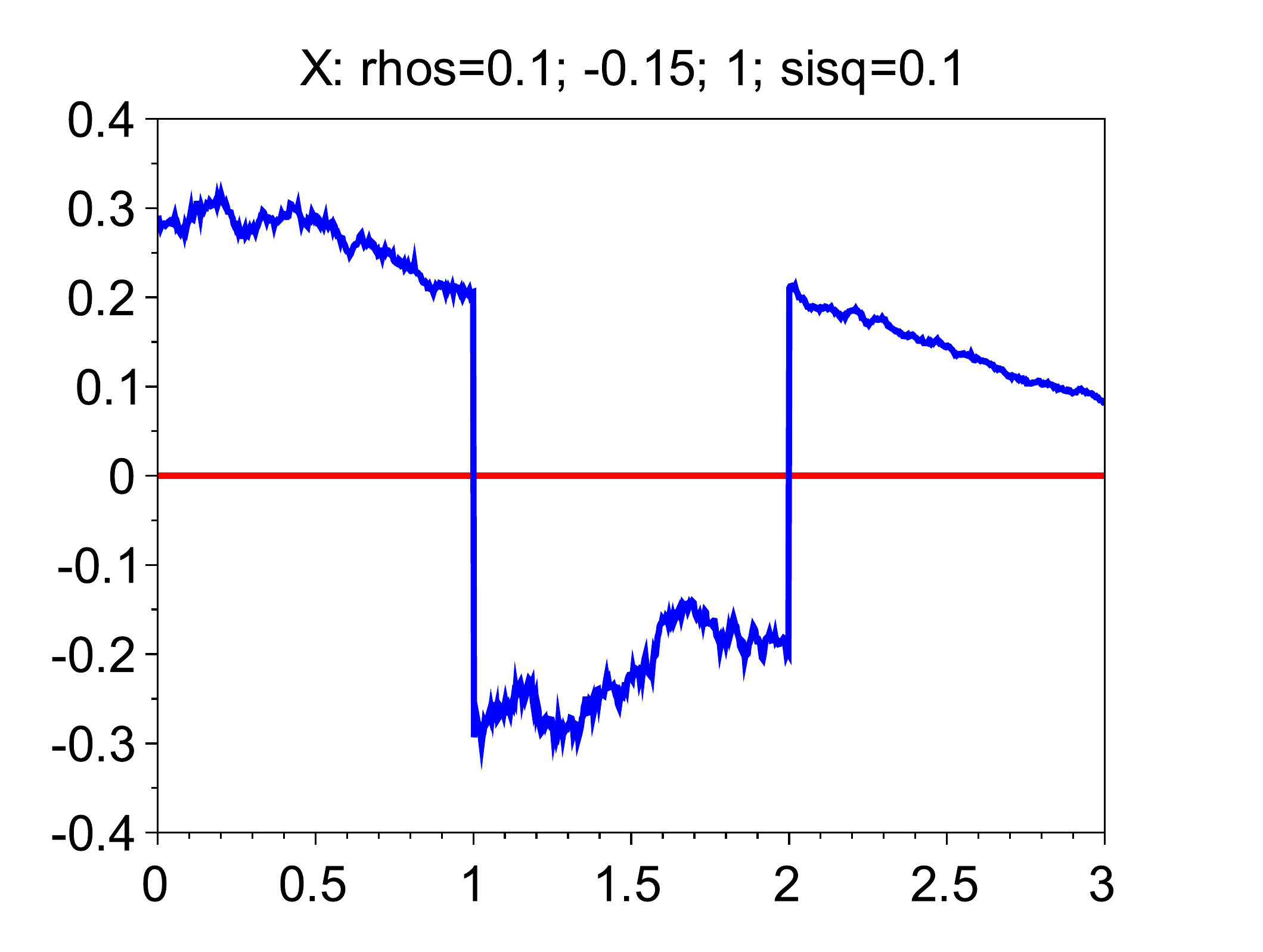}\, 	\hspace{-0.5cm}\includegraphics{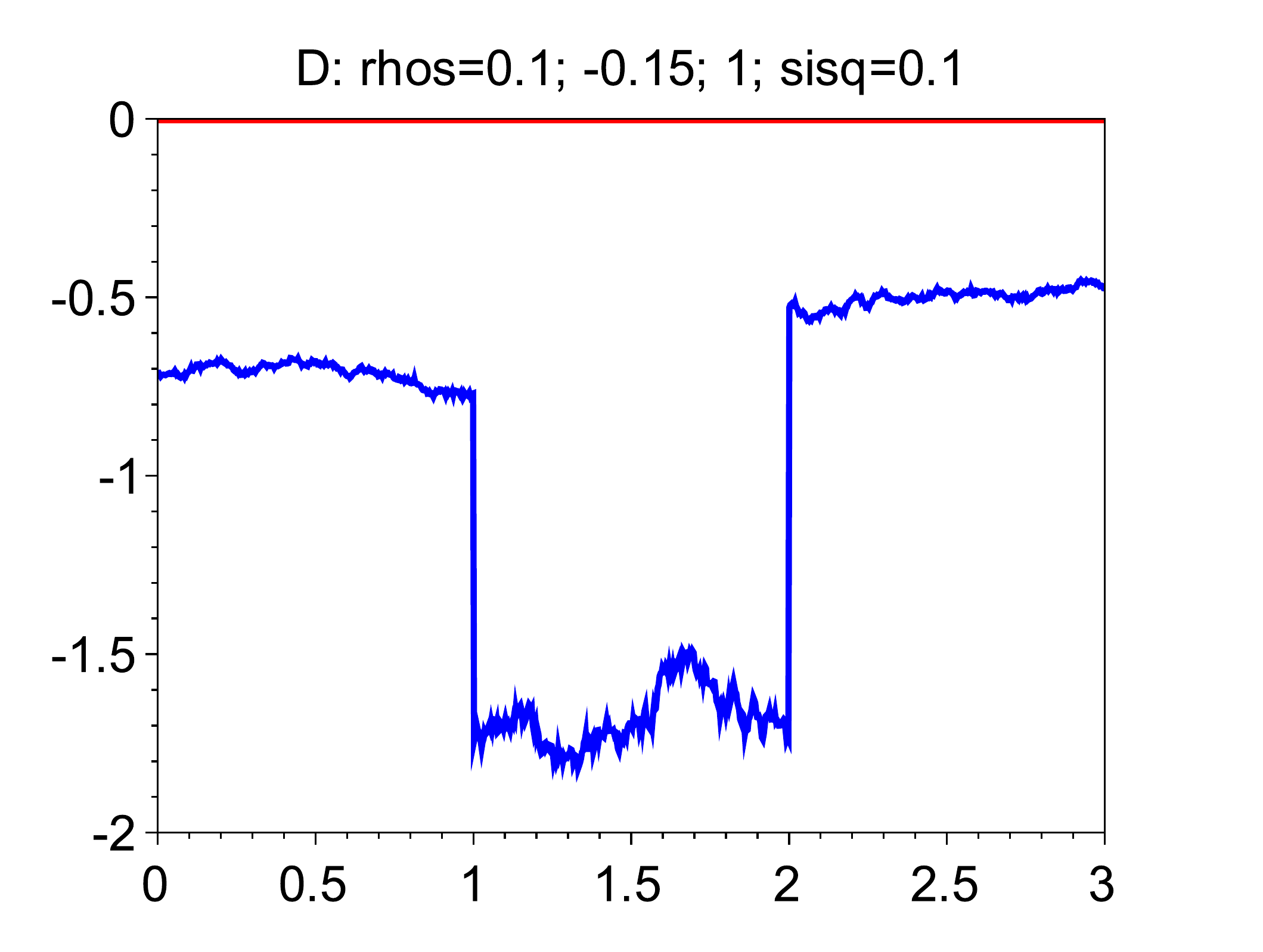}

	\caption{Top row: $\beta$, a path of $X^*$, and the corresponding path of $D^*$ for $\sigma=\sqrt{0.1}$ and $\rho^{(2)}=-0.05$. 
	Middle row: $\beta$, a path of $X^*$, and the corresponding path of $D^*$ for $\sigma=\sqrt{0.1}$ and $\rho^{(2)}=-0.07$. 
	Bottom row: $\beta$, a path of $X^*$, and the corresponding path of $D^*$ for $\sigma=\sqrt{0.1}$ and $\rho^{(2)}=-0.15$. 
	}
	\label{fig:exoverjumpingzerowithsigma}
\end{figure}

\subsection{A case study with premature closure over a time interval}\label{sec:premclosure}

In \Cref{ex:prematureclosureonlyattimepoint} the optimal strategy entails to close the position at a certain point in time and reopen it immediately. 
On the other hand, in the case $\rho \equiv 0$, it is optimal to close the position immediately and not to reenter trading (cf. \cite[Proposition 3.7]{ackermann2020cadlag}).
In the same way we can show that if, say, $\rho=0$ on $(T_1,T)$, for some $T_1\in(0,T)$, then the optimal strategy $X^*$ satisfies $X^*_.=0$ on $[T_1,T]$ (and it can involve non-trivial trading on $[0,T_1]$ depending on behaviour of the model parameters on $(0,T_1)$).
Keeping the position closed during a time interval and reopening again is
more tricky, but
also possible, as we show next. For an illustration, we refer to \Cref{fig:premclosure}.

\begin{figure}[!htb]
	\centering
	\setkeys{Gin}{width=0.48\linewidth}
	\includegraphics{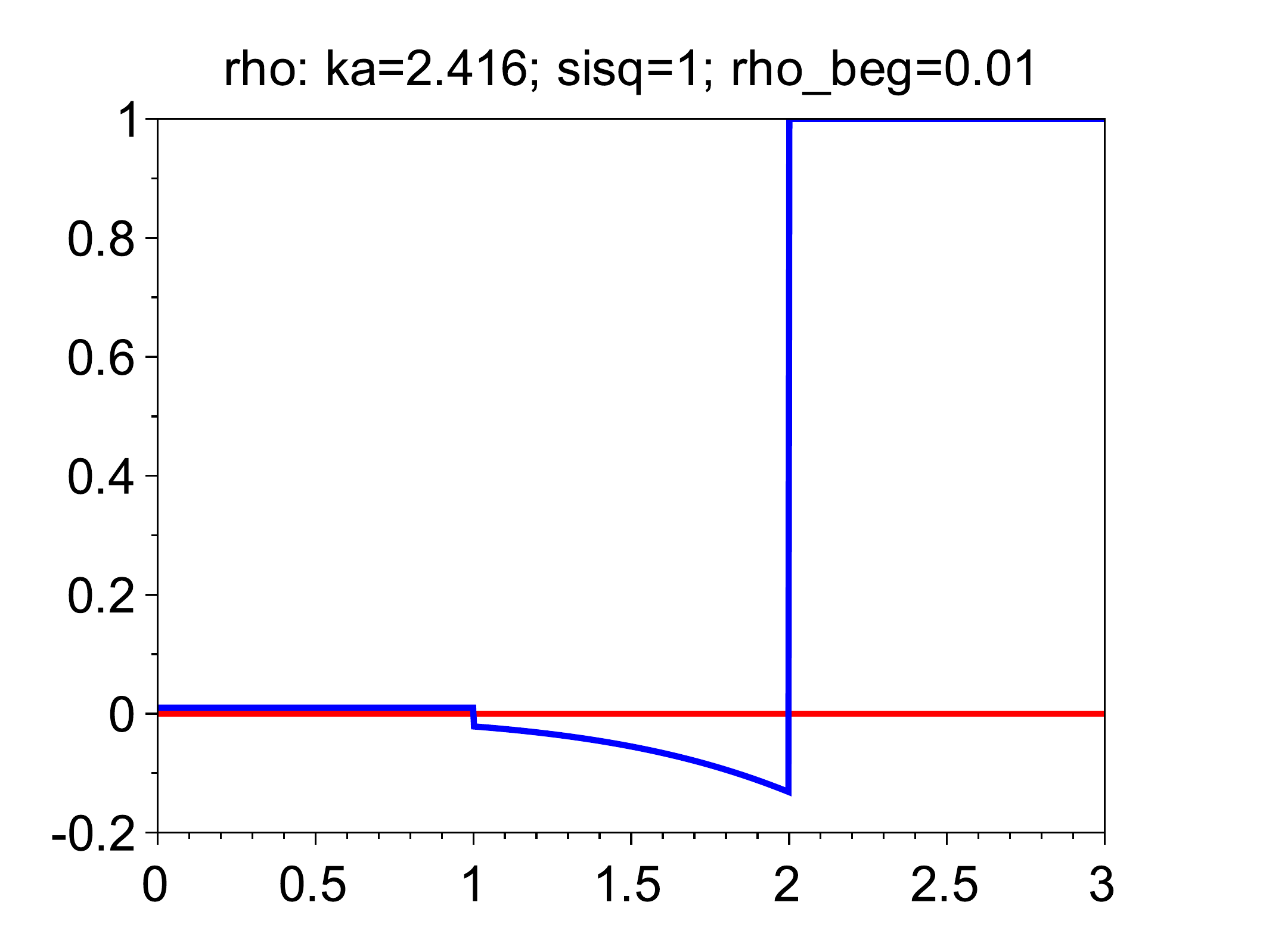}\,	\includegraphics{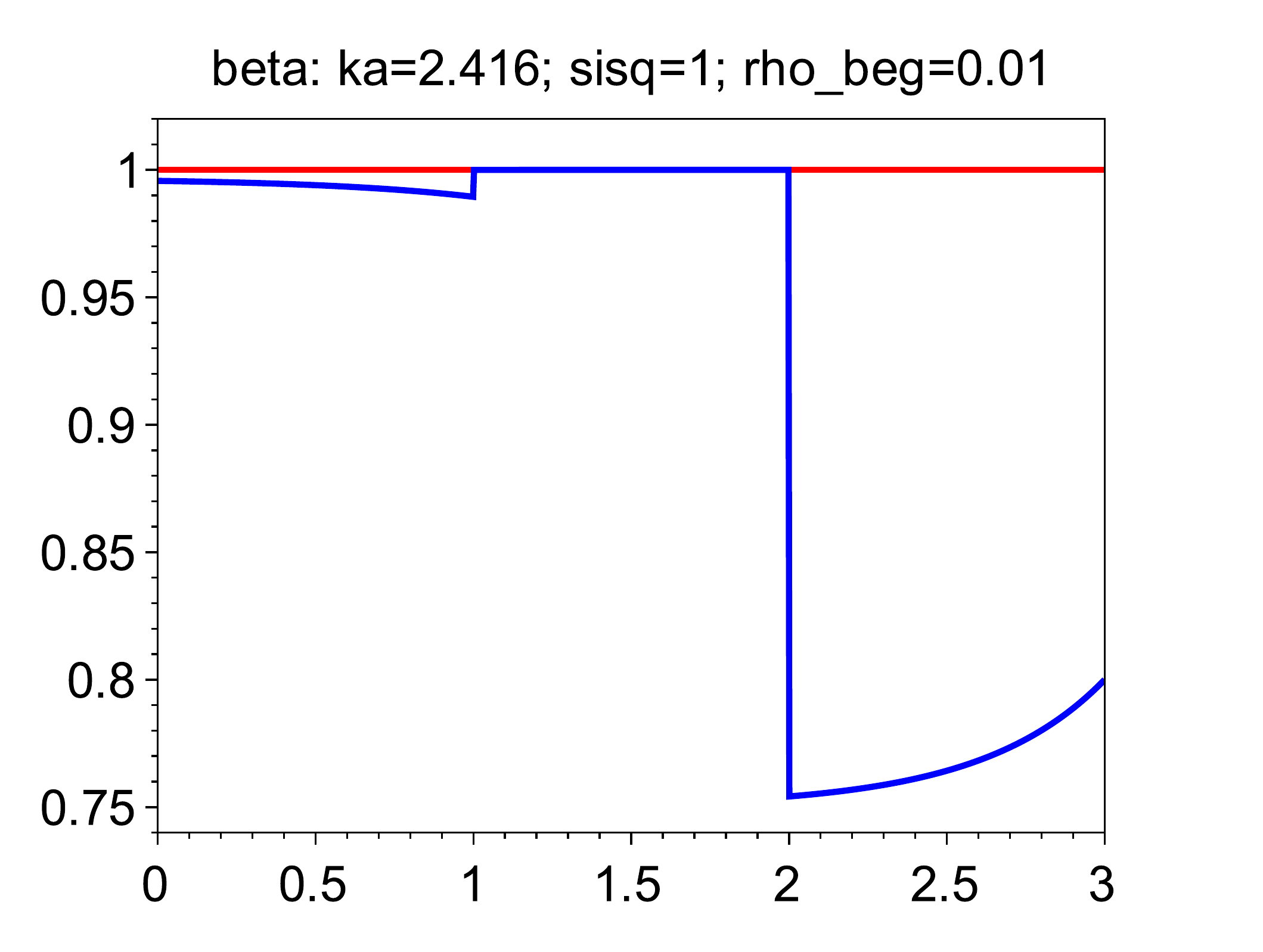} 
	
	\includegraphics{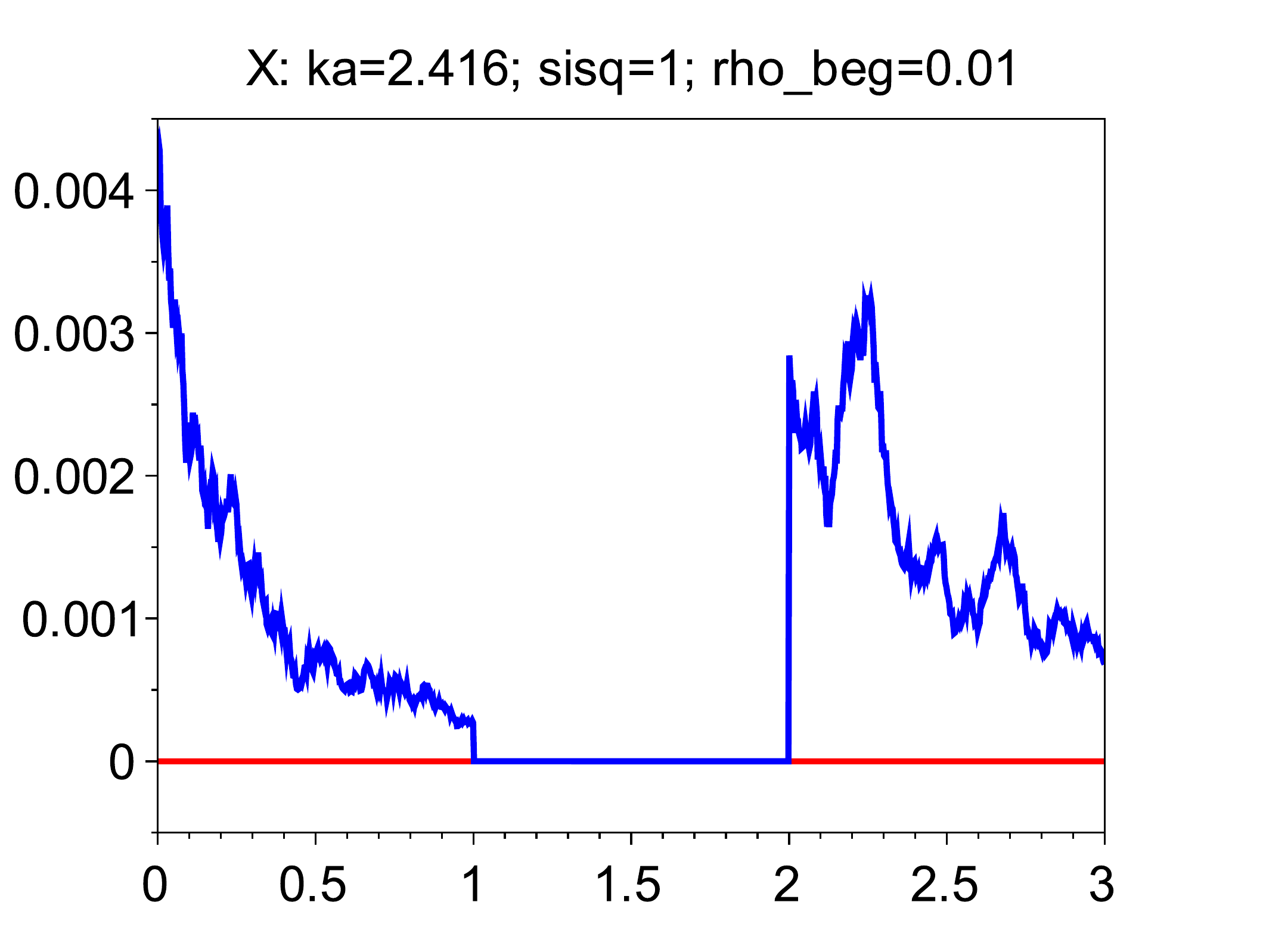}\,	\includegraphics{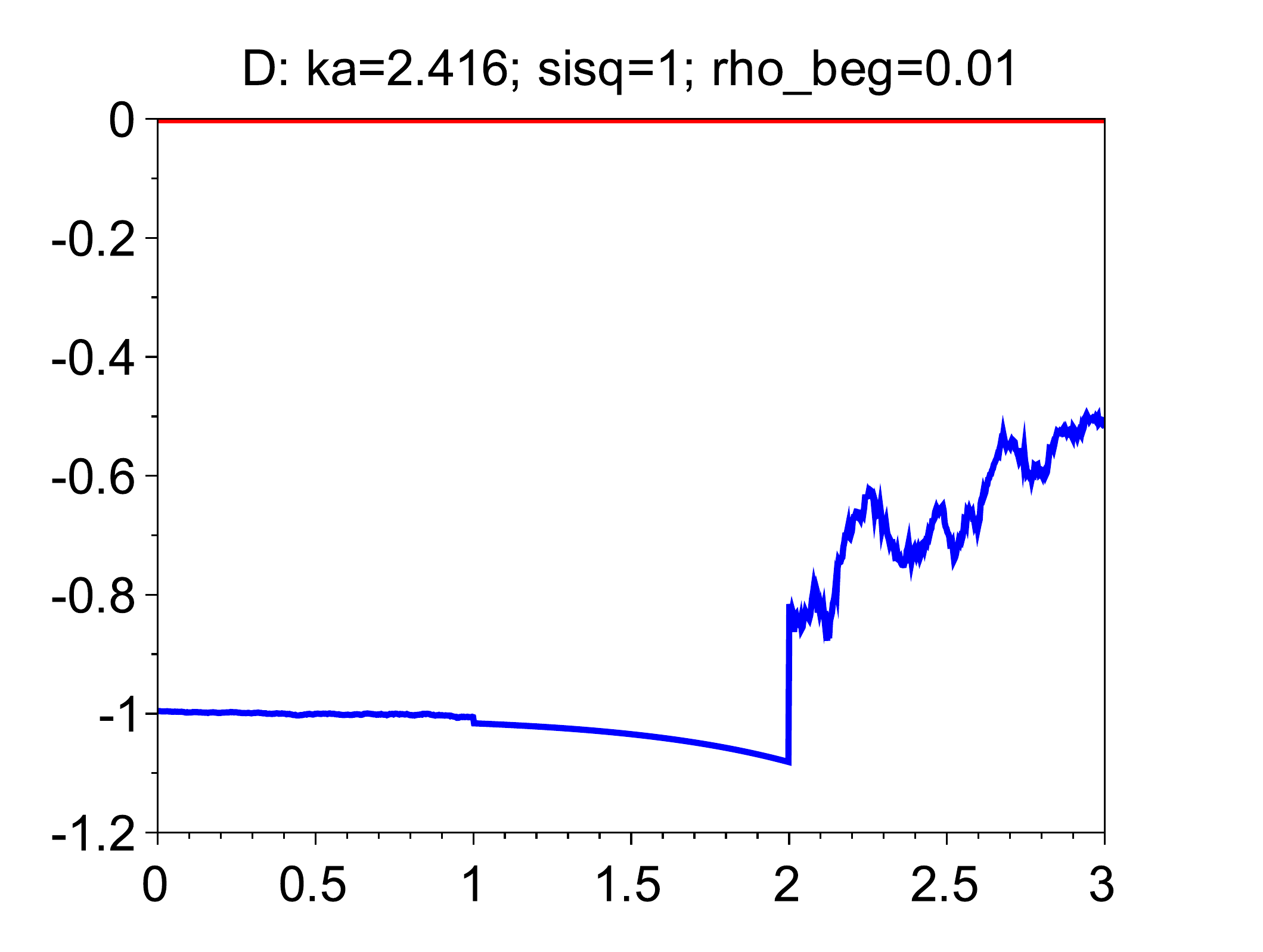}
	
	\caption{The resilience $\rho$, $\beta$, a path of the optimal strategy $X^*$, and the corresponding path of the deviation $D^*$ in the setting where $\sigma$ and $\mu=\sigma^2 +2$ are deterministic constants and $\rho$ is defined as in \eqref{eq:defrhoforpremclosure}. 
	The specific parameter values are $x=1$, $d=0$, $\gamma_0=1$, $\sigma=1$, $T=3$, $T_1=1$, $T_2=2$, $\rho^{(1)}=0.01$, $\rho^{(3)}=1$, and $\kappa=2.416$. 
	Observe that $\beta=1$ and $X^*=0$ between $t=1$ and $t=2$.}
	\label{fig:premclosure}
\end{figure}

	Let $T_1,T_2 \in (0,T)$ such that $T_1<T_2$. Suppose that $\sigma^2>0$ is a deterministic constant and that $\mu=\sigma^2 + 2$. 
	For deterministic $\rho^{(1)}>-1$, $\rho^{(3)}>0$, and $\kappa >0$ let 
	\begin{equation}\label{eq:defrhoforpremclosure}
		\rho_t = \begin{cases}
			\rho^{(1)} , & t \in [0,T_1),\\
			\left( \kappa e^{2(t-T)} + 1 \right)^{-1/2} - 1, & t \in [T_1,T_2),\\
			\rho^{(3)}, & t \in [T_2,T].
		\end{cases}
	\end{equation}
	Note that \eqref{eq:Cgeeps} and \eqref{eq:Cbdd} are satisfied. 
	Let $Y$ be the unique solution of the ODE (cf. \eqref{eq:BSDEforBM} in the current setting)
	\begin{equation}\label{eq:odeforYexconst}
		d Y_t = \frac{(\rho_t + \sigma^2 + 2)^2Y_t^2}{\sigma^2 Y_t + \rho_t +1} dt - (\sigma^2 +2) Y_t dt, \quad t \in [0,T], \quad Y_T=\frac12.
	\end{equation}
	We have \eqref{eq:exbeta} with   
	\begin{equation*}
		\beta_t = \wt\beta_t = \frac{(\rho_t + \sigma^2 + 2)Y_t}{\sigma^2 Y_t + \rho_t + 1}, \quad t\in[0,T].
	\end{equation*}
	This implies that 
	\begin{equation*}
		\{t\in[0,T]\colon \beta_t=1\} = \Big\{ t \in [0,T] \colon Y_t = \frac{\rho_t +1}{\rho_t +2} \Big\}.
	\end{equation*}
	In the sequel we establish that if $\kappa$ is chosen such that 
	$\lim_{t\uparrow T_2} \frac{\rho_t+1}{\rho_t+2}=Y_{T_2}$, then $\frac{\rho+1}{\rho+2}=Y$ on $(T_1,T_2)$. 	
To this end, suppose\footnote{Observe that to determine $Y_{T_2}$ it suffices to consider $\rho$ only on $[T_2,T]$.
In particular, $Y_{T_2}$ does not depend on the choice of $\kappa$.
Moreover, as $\rho^{(3)}\ne0$, we have $Y_{T_2}\in(0,1/2)$
(via a straightforward comparison argument for \eqref{eq:odeforYexconst}).
Therefore, we can set $\kappa=e^{2(T-T_2)}(1-2Y_{T_2}) Y_{T_2}^{-2}>0$.
It follows for this $\kappa$ that $\lim_{t\uparrow T_2} \frac{\rho_t+1}{\rho_t+2}=Y_{T_2}$.} 
	that $\lim_{t\uparrow T_2} \frac{\rho_t+1}{\rho_t+2}=Y_{T_2}$ 
	and define $\wt Y=\frac{\rho+1}{\rho+2}$ on $(T_1,T_2)$. 
	We show that $\wt Y$ is a solution of \eqref{eq:odeforYexconst} on $(T_1,T_2)$. 
	It holds for all $t\in (T_1,T_2)$ that
	\begin{equation*}
		\begin{split}
			\frac{d\wt Y_t}{dt} & = \frac{1}{(\rho_t + 2)^2} \frac{d\rho_t}{dt} 
			= -\kappa e^{2(t-T)} \frac{(\rho_t +1)^3}{(\rho_t + 2)^2} .
		\end{split}
	\end{equation*}
	On the other hand, we obtain for all $t \in (T_1,T_2)$ that 
	\begin{equation*}
		\begin{split}
			\frac{(\rho_t + \sigma^2+2)^2\wt Y_t^2}{\sigma^2 \wt Y_t + \rho_t + 1} - (\sigma^2 + 2) \wt Y_t  
			& = \left( \frac{(\rho_t + \sigma^2 + 2)^2 (\rho_t +1)}{\sigma^2 (\rho_t +1) + (\rho_t +1)(\rho_t + 2) } - (\sigma^2 +2) \right) \frac{\rho_t +1}{\rho_t + 2} \\
			& = \rho_t \frac{\rho_t +1}{\rho_t + 2}.
		\end{split}
	\end{equation*}
	In order to show that 
	\begin{equation}\label{eq:tildeYequ}
		-\kappa e^{2(t-T)} \frac{(\rho_t +1)^3}{(\rho_t + 2)^2} = \rho_t \frac{\rho_t +1}{\rho_t + 2}, \quad t \in (T_1,T_2),
	\end{equation}
	note first that this is equivalent to 
	\begin{equation*}
		-\kappa e^{2(t-T)} (\rho_t +1)^2 = \rho_t (\rho_t + 2), \quad t \in (T_1,T_2).
	\end{equation*}	
	Denoting $a_t=\kappa e^{2(t-T)}$, $t \in (T_1,T_2)$, and using $\rho_t+1=(a_t+1)^{-\frac12}$, $t\in (T_1,T_2)$, we can rewrite this as 
	\begin{equation*}
		-a_t (a_t+1)^{-1} = \left((a_t+1)^{-\frac12} - 1\right) \left( (a_t+1)^{-\frac12} + 1 \right), \quad t \in (T_1,T_2).
	\end{equation*} 
	The right hand side equals $(a_t+1)^{-1}-1$, $t \in (T_1,T_2)$. 
	We thus obtain the equivalent equation 
	\begin{equation*}
		- a_t = 1 - (a_t+1), \quad t \in (T_1,T_2),
	\end{equation*}
	which clearly holds true. 
	This proves \eqref{eq:tildeYequ}. 
	Thus, by uniqueness of the solution of \eqref{eq:odeforYexconst} and $\lim_{t\uparrow T_2} \frac{\rho_t+1}{\rho_t+2}=Y_{T_2}$, we have  $Y=\frac{\rho+1}{\rho+2}$ on $(T_1,T_2)$. 
	This implies that $\beta=1$ on $(T_1,T_2)$. 
It follows that for all $x,d \in \R$,
almost all paths of
the optimal strategy $X^*$ (cf.\ \eqref{eq:optstrat}) equal $0$ on $[T_1,T_2)$.
Finally, observe that if $x,d \in\R$ with $x\neq \frac{d}{\gamma_0}$, then
almost all paths of
$X^*$ are nonzero everywhere on $[T_2,T)$
because, on $[T_2,T)$, we have
$Y\le\frac12<\frac{\rho^{(3)}+1}{\rho^{(3)}+2}$, as $\rho^{(3)}>0$,
i.e., $Y=\frac{\rho+1}{\rho+2}$ holds nowhere on $[T_2,T)$.

\bigskip
\textbf{Acknowledgement:}
We thank two anonymous referees for suggestions that helped improve the manuscript.

\bibliographystyle{abbrv}
\bibliography{literature}

\end{document}